\documentclass[prb,twocolumn,showpacs,amsmath,amssymb,pra,superscriptaddress,longbibliography
]{revtex4-1}

\usepackage{amsthm}
\frenchspacing

\usepackage[pdftex]{graphicx}
\usepackage[usenames,dvipsnames]{color}
\usepackage{epstopdf}
\usepackage[hidelinks]{hyperref}
\usepackage{tabularx}

\usepackage{mathrsfs}

\def\t{{t}}
\def\lrange{{L}}
\def\Op{\hat{O}}
\def\m{M}

\newcommand{\x}{\mathbf{x}}
\newcommand{\Z}{F}
\newcommand{\force}{\Upsilon}
\newcommand{\Np}{N_p}

\def\dr{\mathrm d{\bf r}}

\def\MLcostfct{{\textsl{g}}}

\def\br{{\bf r}}
\def\Ne{N_e}
\def\Nb{N}

\def\dbr{\mathrm{d}\br}
\def\dbrp{\mathrm{d}\br'}

\def\s{_\mathrm{s}}

\def\prefact{{\mathcal{T}}}

\usepackage[encapsulated]{CJK}

\newtheorem{theorem}{Theorem}[section]

\newtheorem{lemma}[theorem]{Lemma}

    \makeatletter
\def\@fnsymbol#1{\ensuremath{\ifcase#1\or \ddagger\or \dagger\or
   \mathsection\or \mathparagraph\or \|\or **\or \dagger\dagger
   \or \ddagger\ddagger \else\@ctrerr\fi}}
    \makeatother

\graphicspath{{FIGs/}}

\begin{document}

\title{Density functionals and Kohn-Sham potentials\\
with minimal wavefunction preparations on a quantum computer}
\author{Thomas E.~Baker}
\affiliation{Institut quantique \& D\'epartement de physique, Universit\'e de Sherbrooke, Sherbrooke, Qu\'ebec J1K 2R1 Canada}
\author{David Poulin}
\thanks{{\it N\'e le 1$^{er}$ d\'ecembre 1976, d\'ec\'ed\'e le 25 juin 2020.}}
\affiliation{Institut quantique \& D\'epartement de physique, Universit\'e de Sherbrooke, Sherbrooke, Qu\'ebec J1K 2R1 Canada}
\affiliation{Quantum Architecture and Computation Group, Microsoft Research, Redmond, WA 98052 USA}
\affiliation{Canadian Institute for Advanced Research, Toronto, Ontario, M5G 1Z8 Canada}
\date{\today}

\begin{abstract}
One of the potential applications of a quantum computer is solving quantum chemical systems. It is known that one of the fastest ways to obtain somewhat accurate solutions classically is to use approximations of density functional theory. We demonstrate a general method for obtaining the exact functional as a machine learned model from a sufficiently powerful quantum computer.  Only existing assumptions for the current feasibility of solutions on the quantum computer are used. Several known algorithms including quantum phase estimation, quantum amplitude estimation, and quantum gradient methods are used to train a machine learned model. One advantage of this combination of algorithms is that the quantum wavefunction does not need to be completely re-prepared at each step, lowering a sizable prefactor.  Using the assumptions for solutions of the ground state algorithms on a quantum computer, we demonstrate that finding the Kohn-Sham potential is not necessarily more difficult than the ground state density. Once constructed, a classical user can use the resulting machine learned functional to solve for the ground state of a system self-consistently, provided the machine learned approximation is accurate enough for the input system.  It is also demonstrated how the classical user can access commonly used time- and temperature-dependent approximations from the ground state model. Minor modifications to the algorithm can learn other types of functional theories including exact time- and temperature-dependence.  Several other algorithms--including quantum machine learning--are demonstrated to be impractical in the general case for this problem.
\end{abstract}

\maketitle
\tableofcontents

\section{Introduction}

Quantum computing has been proposed as an alternative to classical computing,\cite{nielsen2010quantum} and there are some problems which can be solved faster than known classical algorithms.\cite{deutsch1992rapid,shor1994proceedings,grover1997quantum,brassard2002quantum,shor1999polynomial,grover2001schrodinger}  One of the most sought after and potentially far reaching applications on a quantum computer is the solution of quantum chemistry problems.\cite{aspuru2005simulated,brown2010using,lanyon2010towards,whitfield2011simulation,cao2019quantum,mcardle2020quantum}  Obtaining exact solutions from a quantum computer efficiently could revolutionize modern applications including the creation of new medicines, fertilizers, batteries, superconductors, and more.\cite{kang2006electrodes,mavros2014can, cudazzo2008ab,rod2000ammonia,norskov2006density,flores2016high,rydberg2014use}

To do this, one important quantity to determine is the ground state energy. The energy is a highly useful quantity for determining properties such as the equilibrium geometry of a molecule. Yet, the energy is not descriptive enough to fully characterize all desired properties of a system.   For example, the band structure can be a useful tool to characterize a material, but this requires measurements at several $k$-points.\cite{ashcroft1976introduction} So, many measurements of the wavefunction would be required for some simple quantities.

Measurement on the quantum computer is expensive because the wavefunction must often be re-prepared before a second measurement is performed. It has already been shown on a quantum computer that obtaining the wavefunction can be extremely costly, \cite{jansen2007bounds,wecker2014gate,poulin2014trotter,lemieux2020resource} taking months or years even for moderately sized systems,\cite{poulin2014trotter,lemieux2020resource} and that the wavefunction cannot be copied.\cite{park1970concept}  The wavefunction is therefore a valuable commodity and measurements should be minimized.\cite{aaronson2018shadow}  

One option to encode many solutions into one measurement is to use a machine learned (ML) model.\cite{carleo2019machine}   In general, ML models can interpolate remarkably well between input data to give access to many systems, including those not already solved.   In principle, the ML model can be constructed directly on the quantum computer or from classical data generated by the quantum computer.

Using ML models would also allow for users of the quantum computer to export solutions to classical users. The results could then be quickly retrieved classically from the model and are generally accurate over a training manifold on which the model was constructed in a desire to generate the best machine learned model possible. 

Finding the full wavefunction would require exponentially many measurements, so this can be difficult to implement on a quantum computer. But the same information can be expressed in a more compact form.  So, we can look at alternative formulations of quantum physics for the most descriptive model.  

The route pursued here is with density functional theory (DFT).\cite{hohenberg1964inhomogeneous} Hohenberg and Kohn established that the one-body density, $n(\br)$, is one-to-one with the external potential $v(\br)$ up to a constant shift of the potential.  In essence, the density can replace the wavefunction, but it has fewer variables.

In order to use DFT, we must find some other means of obtaining the energy, since the Hamiltonian is not used in DFT.  Instead, the universal functional, $F[n]$, must be found.  It was proven that the universal functional exists and is common to all problems of the same electron-electron interaction.\cite{hohenberg1964inhomogeneous,kohn1965self}

The quantities required for the classical user to find self-consistent solutions are the exact functional (determining the energy) and the functional derivative.\cite{SRHM12}  So, in addition to finding $F[n]$, we also must find some other quantity such as the density, $n(\br)$, or the Kohn-Sham (KS) potential, $v\s(\br)$.\cite{kohn1965self} With these components, we can fully characterize a quantum ground state and solve for other measurable quantities.

It has already been established that the density functional can be successfully modeled with ML methods on the classical computer.\cite{behler2007generalized,SRHM12,LBWB16,brockherde2017bypassing,bogojeski2019density,bogojeski2018efficient,nagai2018neural,li2016understanding,grisafi2018transferable,fabrizio2019electron,denner2020active,nagai2020completing,manzhos2020machine,suzuki2020machine,wetherell2020insights} Exact quantities at several different external potentials must be found for the ML models to be trained. The number of training points needed to construct accurate models are not prohibitively large.    From the ML functionals, self-consistent solutions can be obtained.\cite{SRHB13,SRMB15,li2016understanding,vu2015understanding} Numerically accurate ML functionals satisfy all exact conditions of $F[n]$\cite{LBWB16,hollingsworth2018can} and escape the common errors of approximated density functionals.\cite{engel2011density,gross2013density}

To apply the classical ML-DFT methods on a quantum computer, some additional constraints must be minded. Previous attempts to obtain functionals from the quantum computer have relied on many measurements of the wavefunction for each system of interest.\cite{hatcher2019method,whitfield2014computational,brown2019solver,yang2019measurement,rall2020quantum,mcardle2020quantum}  In our view, a worthy goal is to avoid both excessive measurements and re-preparations of the wavefunction especially in the case of time-dependent quantities.\cite{whitfield2014computational,ullrich2011time}  

This work proposes a feasible algorithm that finds the ML model for $F[n]$ on the quantum computer if a ground state wavefunction is available.  The algorithm leaves the wavefunction largely undisturbed so it can be used as the starting state for another system, greatly reducing the prefactor required to solve other systems. This is accomplished by using a state-preserving quantum counting algorithm to extract descriptive quantities such as the density.\cite{temme2011quantum,bakerGreen20} Much of the algorithm is kept entirely on the quantum computer to motivate future improvements for speed, but the counting algorithm does allow for information to be output classically. 

This algorithm is an alternative to running one very long computation and just measuring one energy, in that each step of the wavefunction preparation is proposed to solve another system.  Thus, no step from the ground state solver is wasted when using the algorithm here.

We also demonstrate that the Kohn-Sham potential can be solved using a similar strategy as the wavefunction.  A gradient evaluated on a cost function for the KS system allows for the determination of the exact KS potential. This strategy can be faster than obtaining the density. Further, we demonstrate how access to the functional can be used to find approximate time- and temperature-dependent behavior in a system from the ground state functional, and modifications can be added to obtain exact results.

One temptation would be to use quantum machine learning, but long-known bounds on the efficiency of these methods preclude their use here.\cite{servedio2004equivalences}  This agrees with recent demonstrations that some known quantum machine learning algorithms are not universally advantageous.\cite{tang2019quantum,tang2018quantum,gilyen2018quantum,chia2020sampling}   We also discuss general limitations on known algorithms such as quantum machine learning and why these algorithms are expected to be inefficient here.

Section~\ref{alg} presents the algorithm Section~\ref{discuss} will discuss several uses of the resulting functional and considerations in choosing quantities to solve for.  Section~\ref{limitations} will discuss known limitations and justify why the algorithm is constructed as presented.  Necessary background information on quantum chemistry, DFT, ML, and quantum computing algorithms is given in the appendixes.

\section{Algorithm for the functional}\label{alg}

In order to establish an algorithm for the quantum computer that gives results in quantum chemistry, knowledge of both fields must be understood.  To avoid a lengthy summary in the main text, we have included relevant background knowledge in the appendixes in case they are needed.  Nearly all of the computational steps ({\it e.g.}, machine learning the functional) have already been demonstrated by us in the references and the algorithms performed accurately. This section will contain all the elements of the algorithm and assumes only a background of algorithms in quantum computing.

We provide Fig.~\ref{compositealg} to illustrate the steps necessary for one iteration of the algorithm, which we will refer to as a recycled wavefunction for minimal prefactor (RWMP) method. Although many quantities could be produced from this algorithm, we will focus on the components useful for the density functional.  The inputs are the external potential for some system ($|v(\br)\rangle$) and initial guess weights for the ML model ($|w^{(i)}\rangle$) represented as classical variables throughout. The following steps are required to obtain the solution for a given system and then update the parameters of the ML model.

\begin{figure}[b]
\includegraphics[width=\columnwidth]{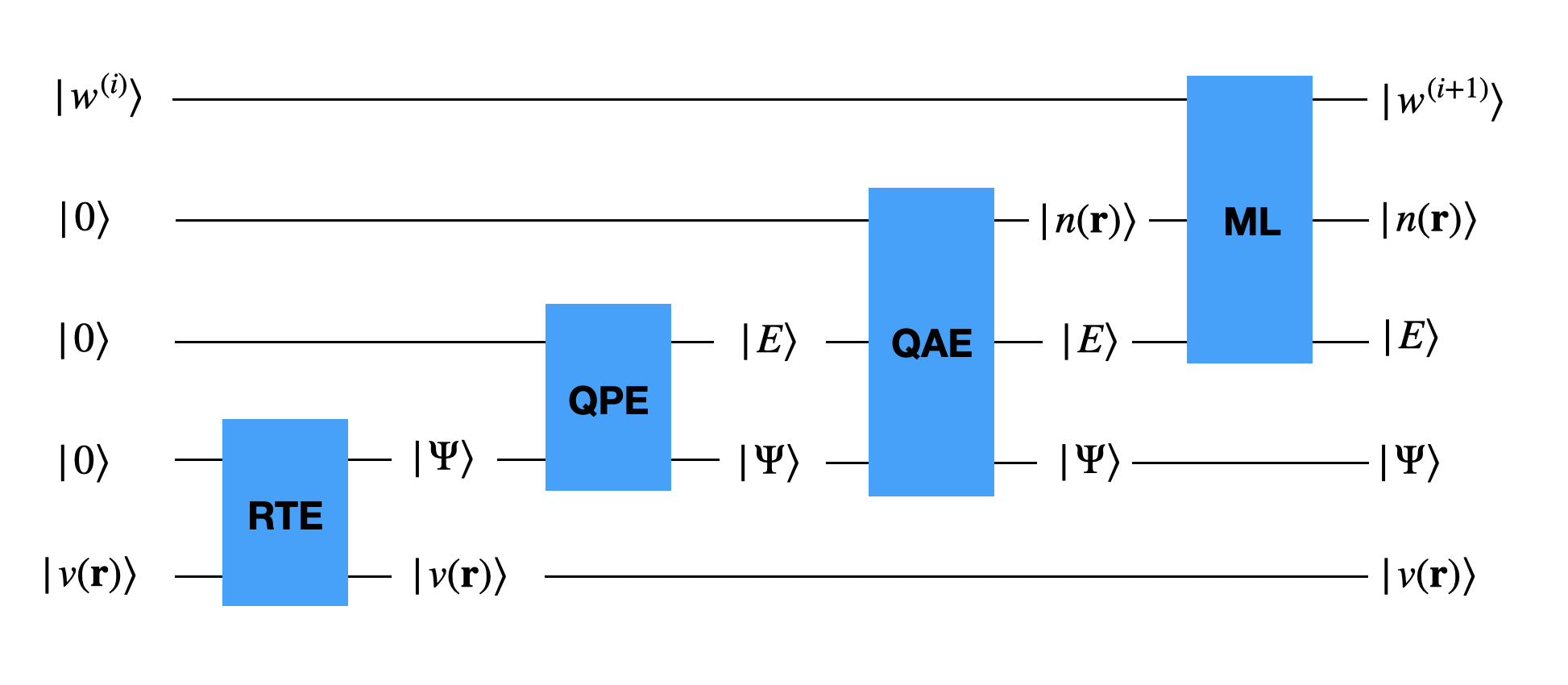}
\caption{One step of the RWMP algorithm for the density.  The next iteration uses the output wavefunction as the starting state for the next system to reduce the prefactor.  A similar procedure can be used to find the KS potential, replacing $n(\br)$ with $v\s(\br)$ (with a QGA as an oracle query for the QAE).  The ML model may need inputs from other registers.
\label{compositealg}}
\end{figure}

\begin{enumerate}
\item We prepare a ground state wavefunction $|\Psi\rangle$ for a given external potential, $|v(\br)\rangle$, and number of electrons, $N_e$.
\end{enumerate}

This can be done by real-time evolution (RTE, see Apx~\ref{rte}).  In Fig.~\ref{compositealg}, this is denoted by a box for RTE.  The subroutine that obtains the wavefunction here does not have to be RTE.  If another, more advanced solver is developed and used, then this can be substituted with no change to the rest of the RWMP algorithm

Note that the methods to obtain quantities from the result of classical computations are not available since the quantum wavefunction has coefficients that are stored in superposition (a linear combination of $|0\rangle$ and $|1\rangle$). This means that the coefficients of the wavefunction cannot be found except with many measurements.  We must first find the energy before obtaining other relevant quantities.

\begin{enumerate}
\item[2.] We obtain the ground state energy $|E\rangle$ from the ground state wavefunction $|\Psi\rangle$ given in the previous step.
\end{enumerate}

Access to the ground state energy is provided by quantum phase estimation (QPE, Appendix.~\ref{QPE}) or with some other method like qubitization.\cite{low2019hamiltonian} On Fig.~\ref{compositealg}, this step is denoted by QPE.

The next task is to determine some quantities of interest without requiring a full measurement of the wavefunction.  The counting algorithm used allows for the wavefunction to be used again on the next iteration.

\begin{enumerate}
\item[3.] Given the energy $|E\rangle$ and wavefunction $|\Psi\rangle$, we generate some quantity that is denoted here as $|n(\br)\rangle$ from a quantum amplitude estimation (QAE, Appendix.~\ref{qcount}), a name which we use interchangeably with quantum counting throughout.
\end{enumerate}

The symbol used, $n(\br)$, is for the density (Appendix.~\ref{DFT}), but we can substitute this quantity for others. For example, one could also determine the KS potential $|v\s(\br)\rangle$.  This step is denoted as QAE in Fig.~\ref{compositealg}.  Note that this step may involve an oracle query such as a quantum gradient algorithm (QGA, Appendix.~\ref{QGA}) or have other subroutines.  This is a point we will expand on in the next section, Sec.~\ref{qinterest}, when discussing how to obtain either $n(\br)$ or $v\s(\br)$. For now, we focus on what to do once the quantity is obtained.

The output wavefunction is slightly modified by the QAE but remains nearly the same state with a small amount of error.  The procedure to return the wavefunction to its original state does not completely re-prepare the wavefunction and instead has an iterative set of steps to repair the wavefunction as explained in Ref.~\onlinecite{temme2011quantum} (see also Appendix.~\ref{qcount}).

\begin{enumerate}
\item[4.] The output of the QAE can be used to update the ML model's parameters $|w^{(i)}\rangle$ to the next iteration, $|w^{(i+1)}\rangle$.  More than one ML model can be trained here ({\it e.g.}, a ML model for $|E\rangle$ and for $|n(\br)\rangle$).
\end{enumerate}

The typical steps in a stochastic gradient descent (SGD, Appendix.~\ref{SGD}) of forward and backward propagation can be used.  For backward propagation, the output of a QGA can be used to update the ML parameters.  This step is marked as ML in Fig.~\ref{compositealg}. This ML operation may need to be controlled on $|v(\br)\rangle$ depending on which quantity is being trained. Note that the QAE output could also be stored clasically and then machine learned not on the quantum computer, skipping this step but eliminating the opportunity for an improvement in the ML box from quantum advantages.

\begin{enumerate}
\item[5.] Another external potential is provided ($|v+\Delta v\rangle$ (not shown in Fig.~\ref{compositealg}) and the wavefunction is re-used as the starting state for the next RTE.
\end{enumerate}

Here, $\Delta v$ is chosen by the user and simply added to the coefficients of $v(\br)$ to give the next potential. Many passes through a set of potentials must occur to obtain an accurate ML model.  

Recall that the resource estimate given in Ref.~\onlinecite{poulin2014trotter}, several months may be required for a small molecule with RTE.  The RWMP algorithm allows for a sequence of intermediate systems to be visited, effectively making use of that time to obtain data.  So, if the system starts in a configuration where the initial wavefunction is accurate, then this can become the first data point for the ML model.  Each subsequent time step could be another potential data point for the model.

The second advantage to this strategy is that the sequence of potentials in the RWMP algorithm can allow for the ordering of the next potential to be close by. This allows for the best starting state for the next system to be used and reduce the total amount of RTE steps that must be run over all systems.  In summary, the RWMP algorithm here could make use of the preparation time for a hard to solve system by finding data from the intermediate steps and reduce the prefactor in the solution.

The RWMP algorithm repeats until all systems are visited.  The final step is to measure the parameters of the ML model, $|w\rangle$.  The model can then be used classically.

In what follows, we discuss which quantities should be obtained by the QAE for the best description of the ground state. Then in Sec.~\ref{discuss}, we discuss aspects of this strategy that were not absolutely necessary in defining the algorithm and how a classical user can use the result. Finally, we expand on many points that were not crucial for the basic understanding of the RWMP algorithm and explain limits that ultimately lead to this algorithm (and why other subroutines were not used) in Sec.~\ref{limitations}.

\subsection{Quantities of interest}\label{qinterest}

We move on to discussing which quantities are best to determine from the wavefunction via the QAE.  There are two main options: the coefficients of the density matrix and the KS potential.  When choosing the quantities of interest, both the functional and the functional derivative must be determined with the ML model in order to find solutions on the classical computer.  There are several types of functionals that can be trained for this, some of which are presented here.  

\subsubsection{Density functionals}\label{densityfunctionals}

In the RWMP algorithm, one option is to find the density, $n(\br)$, since this is proven to be a suitable replacement for the wavefunction from the Hohenberg-Kohn theorem in Ref.~\onlinecite{hohenberg1964inhomogeneous}.  The $N^2$ elements of the density matrix are expectation values of the operator $\hat c^\dagger_i\hat c_j$ (not just diagonal elements; see Apx~\ref{manybody}) where $\hat c$ is a fermionic operator defined in Appendix.~\ref{manybody}.  Spin indices are ignored for simplicity for now.  Since the expectation value is on the interval $[-1,1]$, we can use the operator $(\hat c^\dagger_i\hat c_j+1)/2$ (defined on [0,1]) and afterward shift the result back to the original interval.  Using this shifted operator is a necessary component to using the QAE because the expectation value can now be related to a probability.  The number of rounds required for the QAE relates to the inverse of the probability of failure requested.\cite{temme2011quantum}

There are several options for the ML model. The ML step can train directly from $n(\br)$ to $E$ or we can take as an input $v(\br)$ to the model and train both $n[v](\br)$ and $E[v]$.  The first option gives a pure density functional (Appendix.~\ref{DFT}).  The second option gives a potential functional, which is a dual functional to DFT (Appendix.~\ref{PFT}).\cite{yang2004potential}  Both of these theories can be solved self-consistently.  One can also train the bifunctional $E[n,v]$.\cite{brockherde2017bypassing}

\subsubsection{Kohn-Sham potentials}

The other main quantity, which has more value, is the KS potential, $v\s(\br)$.  The defining feature of the KS system is a noninteracting system that has the same density as a given interacting system (see more discussion and ways to realize this potential in Appendix.~\ref{kohnsham}).  The potential defining this noninteracting system is defined as $v\s(\br)$ which is described by $N$ parameters.  

Two potentials can have the same density if the systems do not have the same electron-electron interactions, so as not to violate the Hohenberg-Kohn theorem.\cite{hohenberg1964inhomogeneous}  The KS potential is highly valuable since it can be applied in several other instances. This includes finding time- and temperature-dependent calculations from $v\s(\br)$ or the KS band structure.\cite{ullrich2011time,perdew1985kohn,perdew1997comment}

A central question is whether $v\s(\br)$ exists for a given interacting system. This is known as the problem of $v$-representability.  Since it is proven that the KS potential always exists on a lattice, $v\s(\br)$ always exists here.\cite{levy1979universal,kohn1983v,chayes1985density,wagnerPRB14}

The KS potential must be converged to, just as we had to evolve an initial state in RTE to a final state. The difference in finding the KS potential is that a gradient is applied instead of a time evolution operator for the wavefunction. 

The KS potential, $v\s$, will satisfy the minimization,\cite{gidopoulos2011progress,callow2018optimal,callow2020density}
\begin{equation}\label{variationalKS}
\underset{v\s}{\mathrm{min}}\left(\langle\Psi[v]|\hat T+\hat{V}\s|\Psi[v]\rangle-\langle\Phi[v\s]|\hat T+\hat{V}\s|\Phi[v\s]\rangle\right)
\end{equation}
where $\Psi$ is the interacting wavefunction and $\Phi$ is the KS wavefunction (see Appendix.~\ref{variationalKSdetails} for more information).   There are other methods to obtain the KS potential,\cite{jensen2016numerical,jensen2018numerical,kanungo2019exact,kumar2020general} but the method used here is straightforward on a quantum computer given a close enough starting guess or small enough molecule ({\it i.e.}, those solvable by RTE).  Many other methods to find the Kohn-Sham potential would either require numerous measurements of the wavefunction or large overhead in terms of qubits for operations that are simple on the classical computer but costly on the quantum computer including addition, division, etc.\cite{draper2000addition}

Equation~\ref{variationalKS} is used as the output of the oracle query in QAE for the QGA.\cite{jordan2005fast,gilyen2019optimizing} Note that the QGA is particularly useful for finding the functional derivatives here, notably taking the variation of all possible $v\s(\br)$ in one oracle query.\cite{jordan2005fast} The resulting gradient is applied on the coefficients of the KS potential and the process is repeated sufficient times until the true KS potential is obtained.  An initial guess for the parameters could be taken from existing semi-local approximations such as local density approximations, etc.~\cite{medvedev2017density,baker2017methods,*perdew2017IPAM} or using a classical method.\cite{gidopoulos2011progress,wagnerPRB14,jensen2016numerical,jensen2018numerical,callow2018optimal,kanungo2019exact,callow2020density,kumar2020general,kumar2020using} The other classical methods where a gradient is used to evolve the potential may be useful, but the density does not need to be constructed to use Eq.~\eqref{variationalKS}.

In order to construct Eq.~\eqref{variationalKS} on a quantum computer, the eigenvalues of free Hamiltonians, such as the KS Hamiltonian ($\hat T\s+\hat V\s$) can be mapped to the interval [0,1] for the QAE.  The operator must also be scaled by a constant, but we also note that shifting the potential by a constant, $v\s(\br)\rightarrow v\s(\br)+\mathcal{C}$, is allowed to ensure all eigenvalues are positive without changing the eigenvectors.\cite{hohenberg1964inhomogeneous} Further, identifying some upper bound on the expectation value, $\varepsilon_\mathrm{max}$, the scaled operator could appear as $\langle\Psi|\hat T\s+\hat V\s + \mathcal{C}|\Psi\rangle/\varepsilon_\mathrm{max}$.

The expectation value $\langle\Phi|\hat T\s+\hat V\s|\Phi\rangle$ can be evaluated in one of two ways.  First, it can be computed by diagonalizing $\hat T\s+\hat V\s$ (determining $h_k$) and taking the sum $\langle\tilde\Phi|\sum_kh_k\hat c^\dagger_k\hat c_k|\tilde\Phi\rangle$ in the diagonalized basis with $\tilde\Phi$ is now used.  This procedure uses QAE to find the expectation value analogously to finding the coefficients of the density.   Alternatively, one could apply a QPE (or qubitization) directly for the KS system, noting that the gate count is drastically less for the noninteracting system.

The KS potential as encoded in the ML model can be expressed as either $v\s[v](\br)$ or as $v\s[n](\br)$, where both are well-defined.\cite{gross2009adiabatic}

\subsection{Example for the Kohn-Sham potential}

To illustrate further some of the more abstract quantities in the RWMP algorithm in Sec.~\ref{alg}, we provide an expanded example of the RWMP algorithm of how to obtain the ML-KS potential.
\begin{enumerate}
\item An initial potential $v(\br)$ is chosen.

This can be done by assuming the external potential is a set of nuclei with a Coulombic interaction, $v(\br)=\sum_a(-Z/|\br-\br_a|)$, where $a$ indexes the positions of the nuclei, $\br_a$ with atomic number $Z$.

\item A basis set is chosen, $\varphi_k(\br)$.

The model can then be discretized in this basis and the resulting model has fermionic operators (see Appendix.~\ref{manybody}).

\item RTE is run and the ground state is obtained.
\item QPE is used to find the ground state energy, $E$.
\item $\Psi$ and $E$ are used in the QAE to find the first term in Eq.~\eqref{variationalKS}. The energy of the noninteracting system is also obtained for the second term in Eq.~\eqref{variationalKS} using some initial guess for $v_s(\br)$. This constructs the oracle in the QGA.

\item The QGA result is added to the coefficients of the KS potential and the last step is repeated until the $v\s(\br)$ a certain number of times to find the minimum.

\item The values $E$ and $v\s(\br)$ are input into a ML model.  The gradient of the parameters in the model are updated with a QGA by repeatedly computing the gradient and adding it to the current coefficients.

We can also use the QGA to output $E$ and $v\s(\br)$ to the classical user and learn the final set of potentials classically.

\item A new set of atomic coordinates are provided, $\br_a'$, the difference $\Delta v$ between the previous and this potential is computed, and the result is added to the old potential.

We now have the next potential and can start again at step 3 until all systems are visited.
\end{enumerate}

  Note that this computation can be restarted at any time (at the significant cost of re-preparing the wavefunction).  Note that it has been assumed that all systems are run for the same basis set, although this condition could be relaxed in principle.

One advantage of using this method is that the Kohn-Sham potential is characterized by $N$ coefficients $\kappa_k$, $v\s(\br)=\sum_{k=1}^N\kappa_k\varphi_k(\br)$. This is a factor $N$ less than the density required.  The evolution of the Kohn-Sham potential from an initial guess potential by gradients is very similar to how the wavefunction is evolved with RTE.

\section{Additional considerations and use of the functional}\label{discuss}

In the previous section, an RWMP algorithm for finding the density functional and Kohn-Sham potential were detailed.  It was also noted that other variations of the density functional could be found using the same technology.

Several points that could extend the RWMP algorithm--and other details about what was introduced--are discussed here.  These include resource costs, a comparison between finding the density and the KS potential, the universality of the functional, how the functional can be used, applications to other types of functional theories, and opportunities for near-term studies.  Some points relating to a quantum advantage that are discussed here are continued in Sec.~\ref{quantumspeedup?}.

\subsection{Scaling and resources required}\label{scaling}

The algorithm will require a fully scalable quantum computer, probably with self-correcting memory.  Near-term examples are available for density functionals, and a full discussion of the known features of finding the density functional with this method is given here.

\subsubsection{Algorithmic scaling}

The scaling of the RWMP algorithm for the density functional in terms of the number of basis sets is $\mathcal{O}(N^2)$ for one system asymptotically in the number of basis functions due to time evolution. The actual complexity in practice is between $\mathcal{O}(N^2)$ and $\mathcal{O}(N^4)$ for intermediate system size (see Appendix.~\ref{basis_sets} and \ref{rte}).  The other subroutines scale at most as $\mathcal{O}(N^2)$ (see the appendixes).  

\subsubsection{Prefactor for convergence of the wavefunction}

Even though the scaling in terms of the number of basis functions is polynomial, the actual cost is significant due to a prefactor.\cite{poulin2014trotter}

The prefactor of the algorithmic scaling is problem dependent and will be the dominant cost to obtaining a ground state wavefunction.  The prefactor will depend on the time used to prepare $\Psi$, the number of steps that the QAE algorithm must be run, and the  required number of systems to be visited.  Note that the number of steps that must be run to obtain the correct time evolution is dependent on how close the starting state is to the system's solution, the number of electrons, how strongly correlated the electrons are, and many other variables.

Note that if the first system in the RWMP algorithm is exactly equal to the initial wavefunction ({\it e.g.}, the well-separated limit for a neutral molecule where each separated piece is a hydrogen atom with one electron and where Hartree-Fock is exact for one-electron systems), then the time to make the first solution is zero and each subsequent motion of the atoms closer together will be another data point that may not require too many time steps to obtain.

\subsubsection{Convergence of the machine learned model}

In order to aid the convergence of the ML model, it would be useful to have several quantum computers running at once.  This would mean that more than one system can be used to construct a mini-batch update for the cost function of the gradient descent, making the resulting update to the ML model more accurate.  One also can save the output quantities as future training points at the cost of extra registers.  Once an accurate ML model is generated, the problem does not need to be solved again.

\subsubsection{Resources required}

The RWMP algorithm presented in this work requires 4+$\zeta$ qubit registers for $\zeta$ quantities of interest that are not the energy ({\it e.g.}, learning $E$ and $v\s(\br)$ implies $\zeta=1$).  There will also be overhead for the QAE and other parts of the RWMP algorithm One can also divide the parameters in the ML model into separate registers for each additional quantity of interest since the model spaces between them should be sparse.

The number of qubits required is clearly large and dependent on the specific steps of the RWMP algorithm used.  This means the RWMP algorithm is best suited for a fully working quantum computer that is expected in the future.  There are many ways to reduce the steps necessary. For example, in one way, by outputting the QAE to a classical user.  However, we want to maintain the flexibility that a quantum advantage could be realized here as discussed next in Sec.~\ref{qNN}.

\subsection{Structure of the neural network}\label{qNN}

In a neural network, the form of the nonlinear function, $\mathscr{S}$ (see Appendix.~\ref{SGD}), is often chosen so that it is easily differentiable.  On the quantum computer, determining the gradients are accomplished in one oracle query (see Appendix~\ref{QGA}), so the traditionally used functions on a classical computer ({\it e.g.}, sigmoids, etc.) can be swapped out for another function that may be lead to better performance or faster convergence.  It is not clear if this will produce any detectable advantage.

The number of coefficients required to obtain an accurate ML model can be very high, but some guidelines can reduce this number for a given quantity.\cite{langer2020representations} For the problem at hand, note that the training is done in the same basis as the problem is expressed.  In this case, the connectivity of the neural network can be constrained on physical arguments.  Known bounds on the structure of local correlations as proved by Hastings in Ref.~\onlinecite{hastings2004locality} apply to densities that was originally shown by Kohn, {\it et.~al.} in Ref.~\onlinecite{kohn1996density,prodan2005nearsightedness}  This means that perturbations on the density decay exponentially with distance for gapped systems and as a power law for gapless systems.\cite{baker2019m}  Then, the neural network models for the density do not need to account for arbitrarily long ranged connections and can remain local.  In essence, the connectivity of edges on the graph for the neural network would be similar to the structure of a multi-scale entanglement renormalization ansatz\cite{vidal2008class} (scaling as $N\log N$) but in three-dimensions. Note that this estimate is a minimum and it may still be advantageous to add more hidden layers or connectivity.  Note that we will expect that gradient-free methods (Appendix.~\ref{gradfree}) will not necessarily be more useful in training the ML model.

Not all quantities will have a local structure.  The $v\s(\br)$ is fully nonlocal due to the Hartree potential (see Appendix.~\ref{findKS}) and can require all-to-all connectivity between subsequent hidden layers in the neural network.

Note that it is not required to train the ML model on the quantum computer.  One can determine the quantities of interest for each system and output the results with the QAE from the quantum computer to the classical computer.\cite{temme2011quantum} But we give the option here to train the neural network on the quantum computer in case a quantum advantage can be realized.   Using the RWMP algorithm as presented in Sec.~\ref{alg} would still avoid excessive measurement; meanwhile, outputting to classical variables will reduce the amount of overhead needed to store ML parameters.

\subsection{Comment on the universal vs.~exact functional}\label{universality}

ML functionals are very accurate over the training manifold on which they were constructed.\cite{SRMB15}  We have chosen the appropriate adjective (exact or universal) describing the functional very carefully in each use here. The procedure we describe here is not truly universal since the accuracy of the ML functional is limited to the training manifold of potentials that we have explored for the ML model.  For example, trying to solve a model trained for solutions with $N_e$ electrons with a new potential in the training manifold with $N_e+1$ electrons will try to project the solution back onto the $N_e$ solutions.

The resulting ML functional can be numerically exact, however, for a given problem.  This functional will be called the exact functional, implying that it is accurate for some systems but not all possible systems.

To complete the discussion, one can have a universal but not exact functional.  For example, if one estimated the energy to be a fixed value for all systems, this would be universal but not very useful.

\subsection{Comparing the search for the density and Kohn-Sham potential}

To actually compare the true cost of the $\mathcal{O}(N^2)$ operations to find the density with the $\mathcal{O}(\prefact N)$ coefficients for the KS potential, one would need to know the number of times $\prefact$ a gradient must be applied. This depends on the system studied and starting KS potential.  It is therefore not clear as a general guideline if directly finding the coefficients of the density matrix is always better than starting from a KS potential that is close and applying the gradients. The key difference in the two cases is that the KS potential relies on a suitable starting state and the density does not. The assumption for the KS potential--that a good starting state is required--is very similar to the restriction on RTE itself to find the ground state, so all assumptions are consistent.

The fact that the KS potential is competitive with finding the density here is more a comment on the overhead required for the density (which is far greater than on the classical computer) than the KS potential being any easier to find.  There is one advantage to using the minimization for $v\s(\br)$ here in that the QGA is much more efficient than the classical computer. 

In addition to comparing the scaling, the nature of the KS problem requires a large basis set to obtain the proper KS potential and avoid any Gibbs oscillations that would appear in a truncated basis set.\cite{kanungo2019exact}  So, the $N$ required for an accurate density or energy might be smaller than for an accurate KS potential in practice.

Regardless, any effort to find $v\s(\br)$ is worth it since $v\s(\br)$ is far more descriptive and gives access to useful quantities.

\subsection{Use of the functional}

The algorithmic cost in Sec.~\ref{scaling} is paid only once.  When the model is given to a classical user, a separate cost must be paid to solve it for the ground state.  Given the ML model, the classical user can solve the ML functional self-consistently by determining the Euler-Lagrange equation to minimize the functional (see Appendix.~\ref{minKS}).  This requires that a projection back into the training manifold must occur to ensure the functional derivative is accurate.\cite{SRHB13,brockherde2017bypassing}  This projection is estimated to scale as $\mathcal{O}(N^2)$ but can be machine learned separately to speed up computation.\cite{SRHB13,LBWB16,brockherde2017bypassing,brockherde2017bypassing}  

Finding the pure density functional has better scaling for the classical user, scaling like the number of basis functions, $N$. Note that even though we can learn $F[n]$, the external potentials (as well as particle numbers and polarizations) over which the ML model is accurate must also be given to the classical user so it is understood what the training manifold is.

The KS system scales as the cube of the number of basis functions formally, $N^3$, since it is a noninteracting problem.  Note that when solving the KS system with the exact functional, convergence is proven.\cite{wagner2013guaranteed,wagnerPRB14}

If a bifunctional, $E[n,v]$, is used then a self-consistent solution is not necessary.\cite{brockherde2017bypassing}

Once the model is trained sufficiently accurately, one can always re-train for another type of functional ({\it e.g.}, from the KS potential, the density can be obtained and a pure density functional can be trained).

Finding the density functional--and using it to compute quantities--is preferable to generating a library of system properties and machine learning those properties.  From the DFT model, the other quantities can be constructed.

As an example application, if trained over enough external potentials, this method could efficiently evaluate molecular dynamics problems giving accurate results on laptops instead of supercomputers.\cite{li2015molecular}

\subsection{Other types of functional theories}

There are many other types of functional theories that are related to DFT that can be obtained with these methods.  The most easily extended method is density matrix functional theory (DMFT).\cite{gilbert1975hohenberg}  Since our method of obtaining the density was to actually obtain the coefficients of the density matrix (see Apx~\ref{qchem}), the density matrix functional is learned.

Extensions of DFT can also be solved with this method including the motion of nuclei,\cite{gidopoulos2014electronic} time-dependence (TD-DFT),\cite{runge1984density,van1998causality,elliott20093,ullrich2011time} thermal properties,\cite{mermin1965thermal} superconducting functionals,\cite{kohn1989orbital,capelle1997density} quantum electrodynamics-DFT,\cite{ruggenthaler2011time,tokatly2013time,ruggenthaler2014quantum} ensembles,\cite{filatov2015ensemble,jouzdani2019method} and others.\cite{cohen2006hardness,cohen2007foundations,elliott2010partition,yang2004potential,cangi2011electronic,cangi2013potential}  In each case, the solution on the quantum computer is modified in some way.  For example, a superconducting functional theory can be found if the pairing potential\cite{schrieffer1963theory} is also computed and learned.

Two of the methods in the above list (time- and temperature-dependent methods) deserve extended discussion in the subsequent sub-sections since both exact and approximate functionals can be found from both.  Note that the approximate methods only require the ground state functional in both cases.

\subsubsection{Time evolution}

Given an accurate enough ML approximation for the KS potential, one can time evolve the system.\cite{suzuki2020machine}  If the system is adiabatically time evolved with a weak enough perturbation, this would be available immediately from just the KS system at zero time (see Eq.~\eqref{chis} in Appendix.~\ref{tddft}).  The adiabatic approximation is often sufficient for many physical processes.

For time evolution beyond the adiabatic limit, an exchange-correlation kernel denoted as $f_\mathrm{xc}$ (see Appendix.~\ref{tddft}) would need to be found. Full evaluation of $f_\mathrm{xc}$ could be accomplished with the QGA (Appendix.~\ref{funcderivQGA}); however, one can estimate the kernel via back propagation in the neural network (see Appendix.~\ref{SGD}).   An accurate functional derivative will be required in this case.\cite{SRHB13}  One can also time evolve on the quantum computer and use expectation values at various times for other quantities.\cite{chowdhury2017quantum,low2017optimal,low2019hamiltonian,rall2020quantum}

\subsubsection{Finite temperature calculations}

For temperature-dependent processes, one approximation that is available from the ground state KS potential is the Fermi-weighted KS technique.\cite{mattsson2006phase}  In this method, occupied and unoccupied orbitals found at zero-temperature can be weighted with a Fermi-Dirac distribution to obtain a finite temperature density.  

For exact computations at finite temperature, Appendix.~\ref{TDFT} discusses the temperature-dependent KS potential (which is very similar to the ground state case) and can be found if a finite temperature wavefunction is provided.\cite{poulin2009sampling,chowdhury2017quantum}

Other theories may require computation of other quantities, including those not relevant to quantum chemistry.  Yet, they can follow the same strategy as the RWMP algorithm ({\it i.e.}, re-use of the wavefunction and QAE) to find the relevant quantities.

\subsection{Non-density functional quantities}

Note that the RWMP algorithm is not specific to the density, the KS potential, or even a density functional.  Many quantities of interest can be obtained.  To see how the continued fraction representation of the Green's function can be obtained, see Ref.~\onlinecite{bakerGreen20}.

\subsection{Reduced examples for testing}

Throughout, it is implied that many qubits will be required, but we view this as similar to requiring sufficient memory on the classical computer.  However,  there are test systems that are used to provide insight into DFT and may be useful for proofs of principle.\cite{stoudenmire2012one,shulenburger2009spin,wagner2012reference,bakerPRB15,*baker2016erratum,helbig2009exact,helbig2011density,elliott2012universal,fuks2015time,lima2003density}  A simple model that may be within reach of existing quantum computers is the two-site Hubbard model, which has been used to study simplified DFT.\cite{fuks2014challenging,fuks2014charge,cohen2016landscape,carrascal2015hubbard,carrascal2018linear,smith2018thermal,sagredo2018accurate,smith2016exact,herrera2018melting}

Each individual algorithm has been tested in cases of relevant interest for other problems, many of which by the authors. Citations to many of the tests of these algorithms are included near their description.

\section{Limitations of known algorithms}\label{limitations}

This section discusses limitations on the types of algorithms that could have been used to machine learn the functional and how future improvements in algorithms must continue to improve the performance on quantum computers. We hope that a clear and complete discussion of the current hurdles for quantum computation motivate future algorithms and provide a consistent account.

\subsection{Feasibility of obtaining the starting state}\label{quantumspeedup?}

In the previous section, we omitted a discussion of how to prepare the initial state in superposition and how it is converged. The time it takes to obtain a wavefunction is known to be inefficient with current techniques. For some algorithms, the ideal input would have been a superposition of solutions that would include all combinations of particle numbers and spin polarizations for all external potentials. 

In this section, we consider the complications for even preparing a suitable state in superposition and known limits.  We will discuss the RTE algorithm in the context of its scaling, why the large prefactor can prohibit its use on some systems, comparing the implementation of RTE on the classical and quantum computer, and limitations to constructing a superposition of all solutions with any method.  This last point will address the feasibility of finding the original starting state for the QML.

Some other algorithms that were not used as subroutines in the RWMP method are also discussed.

\subsubsection{General considerations for real time evolution}

One of the primary motivations for making the RWMP algorithm for the density functional is to recycle the ground state wavefunction, reducing the cost of RTE.  It is true that RTE scales only polynomially (Appendix.~\ref{basis_sets}) with the number of orbitals.  In comparison, the full configuration interaction (FCI) gives the exact results but scales as $\mathcal{O}(\Ne{}!)$.\cite{vigor2015minimising}  So, RTE has a scaling advantage over FCI; however, the prefactor matters.

The exact same RTE algorithm can be run on a classical computer. The reason that quantum chemistry computations are not run with RTE is that the prefactor equal to the number of time steps is large.\cite{poulin2014trotter,mcclean2014exploiting}   This is especially problematic for large systems and those where a near-degeneracy must be avoided, necessitating a small step size.  But this is exactly where quantum computers are hoped to be applied.

Note also that the Trotterization of the time evolution operator is suited for planar molecules since interactions are comparatively localized, which is the same reason that matrix product states prefer these geometries\cite{schollwock2011density,chan2011density,baker2019m} further limiting the usefulness of RTE. 

Further comment becomes far more complicated because these alternative, smaller wavefunction ansatzs on the classical computer may display all the same features of the true ground state and accurate energy.  These solvers typically have systematic errors that are studied, can solve systems faster, and have led to solutions of large systems\cite{yang2014ab} and new discoveries.\cite{arodola2017quantum} So, it is not clear if a quantum computation can beat all classical representations in terms of efficiency. Note that on the classical computer, another solver can be used ({\it e.g.} tensor networks,\cite{schollwock2011density,baker2019m} quantum Monte Carlo,\cite{foulkes2001quantum} random phase approximation,\cite{eshuis2012electron,chen2017random} or many, many other methods\cite{helgaker2014molecular}).  The number of choices here is vast and storied, so we leave more discussion for others.\cite{pople1999nobel,helgaker2014molecular}

In summary, the time complexity of solving the quantum computation should be expected to be larger than classical solvers. Quantum computers can represent a reduction in the amount of space required to store a wavefunction, but it is not clear if this will always beat every classical representation. Improved methods of finding the ground state would be highly valuable.

\subsubsection{Choice of basis sets for algorithms}

It is true that the quantum computer provides a different representation of the wavefunction. In some representations for classical algorithms, the memory will grow considerably with system size.  For example, a coupled cluster calculation can have many coefficients as in the number of operations may become exponentially large.\cite{bartlett2007coupled}  This is due to the need to store more coefficients for a given basis.   The quantum wavefunction on the quantum computer may have some advantage here since coefficients are stored in superposition on each qubit.  However, there are many wavefunction ansatzs to consider in comparison because some methods can find accurate answers with considerably less coefficients, and the time needed for the quantum computer's wavefunction can be lengthy.  

Some recent efforts on the quantum computer have sought to impose specific conditions to reduce the amount of operations required.  The strategies that are pursued are to use different basis sets and generate criteria for removing terms from the Hamiltonian.  For example, plane-waves lead to a sparse Hamiltonian (see also Appendix.~\ref{basis_sets}).\cite{babbush2018low} Such systematic methods for this should be expected to be as difficult as (or more so than) solving the wavefunction outright.

The proper comparison between RTE on a quantum computer with a plane-wave basis set and RTE on a classical computer is that that the classical computer can handle any basis.  So, a comparison with a plane-wave basis set on a quantum computer should be compared to a RTE on the classical computer with some other basis set ({\it e.g.} Gaussians). The difference in the number of orbitals required to accurately simulate matter for plane-waves can be much higher than other basis sets due to cusps in the wavefunction that require large numbers of plane-waves to resolve.\cite{kimball1975short,klahn1984rates,hill1985rates,helgaker2014molecular} So, restricting the basis to plane-waves at best matches the classical equivalent in terms of operations required (see also Appendix.~\ref{basis_sets}).

Wavelets have appeared in some references recently suggesting that these functions can provide a systematic way to compress a Hamiltonian, but in the 30 year existence of these functions (for more information, see the references in Ref.~\onlinecite{bakerPRB18}), they have been demonstrated to scale poorly to large system sizes due to the curse of dimensionality.\cite{beylkin2002numerical} These methods typically give access to 1--3 electrons when not approximating the Fock operator. To use these functions, a compression ratio of nearly 100\% would be required.  Using these functions beyond one-dimension faces significant hurdles in the general, interacting case.

In conclusion, restrictions to a particular basis set or procedures to remove terms in a Hamiltonian is not a cure-all for computation in general.  The task of identifying which terms of the Hamiltonian without first solving the problem is typically complicated, and the resulting answer is not necessarily exact anymore. Approximated calculations are essentially the strategy of classical methods which leave out some effect or terms, and it is not clear how to do this systematically in general, nor if any simple strategy can be expected.  A single method or change in basis is not a panacea to making the solution on a quantum computer less complex.

\subsubsection{Limits on solutions in superposition}

With regards to applying any generic algorithm to find a superposition of solutions, we can place a limit on finding the initial state required for algorithms that need this superposition of solutions.

\begin{theorem}\label{nosuperRTE}
A quantum computer cannot efficiently generate a superposition of solutions for all potentials necessary for $F[n]$ unless BQP=QMA-complete.
\end{theorem}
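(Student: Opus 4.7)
The strategy is a reduction from a QMA-complete ground-state decision problem to the hypothetical efficient superposition preparation, so that a BQP procedure for the latter would yield a BQP procedure for the former. The starting point is that the local Hamiltonian problem (and in particular the electronic-structure problem with a tunable external potential on a lattice) is known to be QMA-complete; any universal functional $F[n]$ must by its very definition give access to the ground-state data of every $v(\br)$ in its domain, including instances whose ground-state decision problem is QMA-complete.

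First I would make explicit the object whose preparation is being ruled out: a state of the form
\begin{equation}
|\Xi\rangle=\sum_{v\in\mathcal{V}}\alpha_v\,|v\rangle\otimes|\Psi_v\rangle,
\end{equation}
where $\mathcal{V}$ indexes the external potentials needed to pin down $F[n]$ on its training manifold, and $|\Psi_v\rangle$ is the ground state of the interacting Hamiltonian with external potential $v$ at fixed $\Ne$. Assume for contradiction that a polynomial-size quantum circuit prepares $|\Xi\rangle$ (this is the premise ``efficiently'' in the statement).

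Next I would perform the reduction. Given an arbitrary instance of the QMA-complete local Hamiltonian problem, encode it as a particular potential $v^\star\in\mathcal{V}$ using the standard embedding of local Hamiltonian instances into electronic-structure Hamiltonians with tunable site energies. Prepare $|\Xi\rangle$, measure the potential register in the computational basis, and post-select on the outcome $v^\star$. Conditional on success, the second register collapses to $|\Psi_{v^\star}\rangle$, to which I would apply QPE (Appendix~\ref{QPE}) to extract the ground-state energy with polynomial precision and thereby decide the original QMA-complete instance. If the post-selection success probability $|\alpha_{v^\star}|^2$ is at least $1/\mathrm{poly}$, amplitude amplification boosts it to $\Omega(1)$ in $\mathrm{poly}$ rounds, so the overall procedure stays in BQP. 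The conclusion BQP$\supseteq$QMA, combined with the trivial BQP$\subseteq$QMA, yields BQP$=$QMA, contradicting the assumed separation and proving the theorem.

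The delicate step, and the place I expect the main obstacle, is the amplitude bookkeeping for $\alpha_{v^\star}$. The ``universal'' training manifold is in general exponentially large, so a uniform superposition would place exponentially small weight on any given $v^\star$, and a direct post-selection argument then only yields a QMA-in-PostBQP style conclusion rather than QMA-in-BQP. To fix this I would restrict the reduction to a polynomially sized $\mathcal{V}$ that still contains a QMA-complete sub-family (for instance, a $\mathrm{poly}(N)$-parameter family of site-energy patterns on a fixed lattice, which already inherits QMA-completeness from the local Hamiltonian problem), and argue that the preparation circuit for $|\Xi\rangle$ on this restricted $\mathcal{V}$ is no harder than on the full manifold, so the hypothesis still supplies it. A secondary obstacle is making sure the embedding of the QMA instance into an admissible electronic-structure $v(\br)$ respects $\Ne$-sector and $v$-representability constraints; this is handled by invoking the lattice $v$-representability result cited in the KS-potential discussion, which guarantees that the constructed instances lie in the legitimate domain of $F[n]$.
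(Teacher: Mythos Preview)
Your reduction is the same idea the paper uses, but the paper's proof is far terser: it simply observes that the domain of the universal functional contains at least one QMA-complete ground-state instance (citing the Oliveira--Terhal / Schuch--Verstraete results), so an efficient preparation of \emph{all} ground states in superposition would in particular solve that instance, forcing $\mathrm{BQP}\supseteq\mathrm{QMA}\text{-complete}$. That is the whole argument---two sentences, no explicit post-selection or amplitude analysis.

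Your elaboration into an explicit circuit-level reduction is more rigorous than what the paper supplies, and the amplitude bookkeeping you flag is a genuine technical point the paper glosses over. Your proposed fix (restrict to a polynomially sized $\mathcal{V}$ that still carries a QMA-hard family) is workable but slightly awkward, since it quietly alters the hypothesis. A cleaner route is to read ``efficiently generate a superposition of solutions for all potentials'' as the existence of a polynomial-size unitary $U:|v\rangle|0\rangle\mapsto|v\rangle|\Psi_v\rangle$ acting coherently on the potential register; then you simply input $|v^\star\rangle$ for the QMA-hard instance, obtain $|\Psi_{v^\star}\rangle$ directly with unit amplitude, and run QPE---no post-selection, no amplitude amplification, and the reduction goes through without the delicate step. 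This is almost certainly the informal reading the paper has in mind.
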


\begin{proof}
It is also known that at least one of the systems contained in the universal functional is in the computational class QMA-complete to solve.\cite{oliveira2005complexity} This is not efficient on the quantum computer to compute objects in this complexity class.  So, no algorithm should be able to obtain all solutions efficiently.
\end{proof}
Thus, some elements should be expected to be unconverged in a superposition unless the algorithm is run for an impractically long time. Note that algorithms requiring a superposition generally require more than one solution, so the superposition is not run just once.

\subsubsection{Alternative algorithms to real time evolution}

We do not rule out that some improvement may allow RTE (or some other method) to receive a quantum advantage. If a superior algorithm is developed ({\it e.g.}, the tools exist to make imaginary time evolution,\cite{mcardle2019variational} perturbation theory,\cite{hackl2008real} preparation of projected entangled pair states,\cite{schwarz2012preparing} or a very expensive version of the density matrix renormalization group\cite{schollwock2011density,gilyen2019quantum,yanofsky2008quantum} at the present time), it is likely that it will rely on converging from some initial state.  Also, any algorithm like exact diagonalization for general systems is not expected to be efficient since this problem is not contained in the BQP complexity class.\cite{oliveira2005complexity,liu2007quantum,schuch2009computational} This creates more motivation to focus on algorithms that converge. 

We do note that progress through the decades on quantum chemistry has been difficult to find a unifying principle that would help in algorithm design,\cite{pople1999nobel} but perhaps quantum computing may motivate a new way to look at the problem for cases of interest since the prospects of finding a general algorithm are prohibitive.

\subsection{Other methods and quantum machine learning}\label{rationale}

In regards to other methods to train the ML model, we had investigated using alternative subroutines using a superposition of solutions.  Algorithms that we considered included Grover's algorithm,\cite{grover1997quantum,grover2001schrodinger} quantum walks,\cite{szegedy2004quantum,lemieux2020efficient} and others.  Each of these requires a solution of a superposition of systems, some means of identifying the correct solutions through a phase kickback, undoing the superposition of systems, and then repeating the process until the error is low enough to ensure the correct solution is determined to some high probability.  

The limitations described in Theorem~\ref{nosuperRTE} are one hurdle, but in our view, these strategies of uninformed search were too lengthy even on the quantum computer. Since the improvement in the number of steps required is only by a square root factor, searching the exponentially sized database causes the algorithm to run for far longer than other classical algorithms for electronic structure.

The variational quantum eigensolver (VQE)\cite{peruzzo2014variational} might be adapted to finding, for example, the Kohn-Sham potential with a method from Refs.~\onlinecite{gidopoulos2011progress,wagnerPRB14,jensen2016numerical,jensen2018numerical,callow2018optimal,kanungo2019exact,callow2020density,kumar2020general}, but it is not clear if errors can be kept small in a reasonable amount of time- and wish to keep focus on finding the exact KS potential.  The VQE would also require the exact density from another method, but it could in principle be adapted.  The main question is how well this suggestion would perform on a quantum computer based on current hardware.

\subsubsection{Quantum machine learning}

Quantum machine learning (QML) algorithms were also not suitable since known bounds on the number of oracle queries imply that there is only a polynomial speedup for QML algorithms. See Ref.~\onlinecite{servedio2004equivalences} and experimental proof on a simple case in Ref.~\onlinecite{riste2017demonstration}. A confirming statement is also found in in Ref.~\onlinecite{arunachalam2018optimal}.  

Lacking an exponential speedup, it is likely too expensive for the quantum computer to run in a reasonable amount of time here.

Existing statements in the literature on the hardness of determining the universal functional can lead to limits on the types of QML algorithms we expect can exist. We formalize the relevant limits in some statements here.\footnote{Theorems listed here are so heavily dependent on pre-existing proofs that they are probably more accurately called a corollary of those theorems, but this naming convention is used in several physics works and here to match.}

\begin{theorem}\label{QMLthm}
No QML algorithm can discover the universal functional in polynomial time unless QMA-complete reduces to the complexity class BQP.
\end{theorem}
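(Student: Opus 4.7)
The plan is to argue by contradiction along the same lines as Theorem \ref{nosuperRTE}, but now addressing the specific obstructions that a QML strategy faces. Assume that some QML algorithm produces, in time polynomial in the system size, a representation of $F[n]$ that can be evaluated to the accuracy required for chemistry on every density in its domain. First I would invoke the Oliveira--Terhal result already cited, which guarantees that at least one system whose exact ground-state problem is QMA-complete lies within the domain of the universal functional. Evaluating the trained model on the density of that system then yields the ground-state energy of a QMA-complete instance in polynomial time, i.e.\ a BQP algorithm for a QMA-complete problem, which is the desired contradiction.

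Next I would close off the obvious escape route, namely that a QML procedure might learn the functional without ever being trained on the hard instance itself. Here I would appeal to the oracle-query lower bounds of Servedio and Gortler (and the follow-ups by Arunachalam and de~Wolf) cited earlier in the text: QML algorithms achieve at most a polynomial improvement over their classical counterparts in the number of labelled examples or oracle calls needed to PAC-learn a concept class to fixed accuracy. Since $F[n]$ must be accurate on every external potential in its domain, learning it requires training information that spans that domain, and by Theorem \ref{nosuperRTE} no polynomial-time procedure can supply this family of examples unless BQP contains QMA-complete. This blocks the ``learn without touching the hard instances'' loophole.

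Finally, I would handle the subtler variant in which the QML algorithm is granted a coherent quantum oracle returning ground states in superposition rather than classical labelled pairs. Even there, the polynomial query bound implies that the effective number of distinct instances informing the model is polynomial, and the correctness of the model on any single instance, including a QMA-complete one, still reduces to querying that instance. Stringing the three reductions together gives the claim: a polynomial-time QML learner for $F[n]$ would collapse QMA-complete into BQP.

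The hard part will be making precise what ``discover'' means: one must fix the access model (classical data, quantum ground-state oracle, or phase-estimation oracle), the target accuracy, and the fraction of the domain on which the learned functional must agree with $F[n]$. Without such a specification, the statement is vulnerable to the trivial ``universal but inaccurate'' functional discussed in Sec.~\ref{universality}. The cleanest route, I think, is to demand chemical accuracy on a polynomially-dense subset of the training manifold and then invoke the Oliveira--Terhal construction to embed a QMA-complete family into any such subset, after which the reduction sketched above goes through without further subtlety.
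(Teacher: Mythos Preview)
Your core reduction is essentially the paper's argument, but the paper's proof is dramatically shorter: it simply cites Schuch and Verstraete (Ref.~\onlinecite{schuch2009computational}) for the statement that the universal functional is QMA-complete to learn, and then observes that a polynomial-time QML algorithm would place a QMA-complete problem in BQP. That is the entire proof---two sentences, no further analysis of access models, escape routes, or oracle variants.

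Two remarks on the differences. First, you route the hardness through Oliveira--Terhal (local Hamiltonians) and then argue that evaluating the learned $F[n]$ on ``the density of that system'' yields the ground-state energy. This has a small gap: you do not have the ground-state density of the hard instance in hand, so you cannot simply plug it in. The cleaner route---and the one implicit in the Schuch--Verstraete citation the paper actually uses---is that possession of an efficiently evaluable $F[n]$ lets you solve the variational problem $E=\min_n\{F[n]+\int n\,v\}$ for the hard $v$, or equivalently reduces $N$-representability to a BQP computation. Citing Schuch--Verstraete directly sidesteps the need to produce the density yourself.

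Second, your steps 2--4 (closing the PAC loophole via Servedio--Gortler and Arunachalam--de~Wolf, handling coherent oracles, and worrying about what ``discover'' means) go well beyond what the paper proves here. The paper defers the PAC-learnability discussion to a separate Lemma immediately following the theorem, and does not address the coherent-oracle or accuracy-specification issues at all. Your treatment is more careful and more honest about the ambiguities in the statement; the paper's version is closer to a corollary of the cited hardness result than a standalone argument.
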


\begin{proof}
The functional is proven to be QMA-complete to learn.\cite{schuch2009computational}  If an algorithm determined the functional in polynomial time, then BQP=QMA-complete.
\end{proof}

Note that this says nothing about whether QML can do slightly better than the classical algorithm, but typically a step in a QML algorithm is to re-prepare the wavefunction and this is one of the main issues we want to avoid here. Because the functional is known to be QMA-complete to learn, finding the exact and universal functional with the RWMP algorithm would require an exponential amount of time to visit all systems, as expected.\cite{oliveira2005complexity,liu2007quantum,schuch2009computational,whitfield2013computational}   We continue on to make a connection with some statements about learnability.

Theorem~\ref{QMLthm} also implies limits on how learnable the universal functional is by any method.  In Ref.~\onlinecite{servedio2004equivalences}, under the probably approximately correct (PAC)\cite{valiant1984theory} model of ML, only a polynomial reduction in oracle queries (training points) is possible with QML.

\begin{lemma}
The assumed limitations on the number of oracle queries (quantum and classical learning differ by only polynomial factors) required to discover $F[n]$ with QML are the same as those under the PAC model.\cite{servedio2004equivalences}
\end{lemma}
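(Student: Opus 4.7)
The plan is to reduce the problem of learning the universal functional $F[n]$ to a concept-learning task that falls under the hypotheses of Servedio and Gortler in Ref.~\onlinecite{servedio2004equivalences}, and then quote their oracle-query separation verbatim. First, I would fix the data-generating framework: an example for the learner is a pair $(v(\br), F[n_v])$ where $n_v$ is the ground-state density associated to potential $v$, drawn from whatever distribution $\mathcal{D}$ over admissible external potentials the training procedure uses (e.g., the distribution induced by the RWMP sweep of Sec.~\ref{alg}). This identifies $F[\,\cdot\,]$ with a target concept in a concept class $\mathcal{C}$ indexed by the Hamiltonian family, and the ML model of Sec.~\ref{alg} with a hypothesis class $\mathcal{H}$ of bounded VC/fat-shattering dimension once the network width and weight precision are fixed.

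Next, I would verify that both the classical oracle (a sample $(v, F[n_v])$ drawn from $\mathcal{D}$) and the quantum oracle (a coherent superposition $\sum_v \sqrt{\mathcal{D}(v)}\, |v\rangle|F[n_v]\rangle$, which is exactly what the QAE/QGA subroutines of Sec.~\ref{alg} produce when the RTE register is not collapsed) match the two oracle models compared in Ref.~\onlinecite{servedio2004equivalences}. This is the one step that requires care: I need to argue that supplying the coherent training oracle is no easier than running the RWMP pipeline itself, so the ``quantum sample'' model is the natural quantum analogue of the PAC example oracle, and no additional side-information channel sneaks in. Given that identification, the hypotheses of Servedio--Gortler's main theorem hold.

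With the reduction in place, the conclusion is immediate: their result states that for any concept class, the sample complexities under the two oracle models differ by at most a polynomial factor (in $1/\epsilon$, $1/\delta$, and the combinatorial dimension of $\mathcal{C}$). Instantiating this with $\mathcal{C}$ equal to the class of universal functionals reachable by the Hamiltonian family produces exactly the statement of the lemma. The main obstacle, as noted above, is the oracle-equivalence step: one must be explicit that the quantum training oracle available from the quantum computer is of the Servedio--Gortler form and not a strictly stronger object (such as an oracle giving functional derivatives or phase-kickback access to $F$), because otherwise the polynomial-equivalence theorem does not apply and a larger separation could in principle exist. Everything else is bookkeeping on the PAC side.
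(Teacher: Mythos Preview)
Your argument is considerably more elaborate than what the paper actually does. The paper's proof is a two-line contrapositive leaning on the preceding Theorem~\ref{QMLthm}: if the polynomial-equivalence bound of Ref.~\onlinecite{servedio2004equivalences} did \emph{not} apply to learning $F[n]$, then some QML procedure would achieve an exponential reduction in oracle queries and hence discover $F[n]$ in a way that contradicts both Theorem~\ref{QMLthm} (QMA-completeness of the functional) and the Servedio--Gortler bound itself. No concept-class construction, no explicit oracle identification, no VC/fat-shattering bookkeeping.

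By contrast, you try to \emph{embed} the functional-learning task directly into the Servedio--Gortler framework and then read off the polynomial equivalence as an instantiation of their main theorem. That is a legitimate and in some ways more transparent route---it makes the PAC hypotheses explicit rather than invoking them by name---but it buys you an obligation the paper avoids: you must actually check that the quantum oracle furnished by the RWMP pipeline is of the coherent-sample form assumed in Ref.~\onlinecite{servedio2004equivalences} and not something stronger (phase access, derivative access, etc.), a point you correctly flag as the delicate step. The paper sidesteps this entirely by arguing at the level of complexity classes rather than oracle models. Your approach would give a cleaner self-contained statement; the paper's approach is shorter and rides on the hardness result already established.
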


This is a consequence of the hardness of finding the functional.  If this were not true, we could find a QML algorithm that could discover $F[n]$ in exponentially fewer steps, which is a violation of both Theorem~\ref{QMLthm} and Ref.~\onlinecite{servedio2004equivalences}.  So, there can be no exponential speedup for QML under the PAC model here.

While there may be cases that lie outside of the PAC model,\cite{amsterdam1988valiant,bergadano1989error,haussler1990probably,buntine1990theory,pazzani1992framework} we have we no evidence that the functional is not subject to the PAC assumptions.  There is also good evidence to suggest that simple systems--at least--obey the limits of the PAC model.\cite{riste2017demonstration}

In summary, the difficulty of finding the universal functional on a quantum computer places limits on the ability for many-body solvers on the quantum computer and QML as well. Note the generality of the statements here for all QML algorithms. Recent works have shown that some existing QML algorithms are not as efficient as once thought,\cite{tang2019quantum,tang2018quantum} but the arguments here apply to QML in general for this problem. 

These general arguments do not prohibit a quantum advantage if another algorithm can be found to solve systems in a more specific case or restricted class of systems. Recent progress on finding classical algorithms that are superior to quantum algorithms illustrate the need for caution when proposing an efficient QML algorithm, however.\cite{chia2020sampling}  Still, QML does not seem to be a feasible way forward for the problem of interest here.

\subsection{Summary}

The main take-away from these statements is that an exponentially more efficient algorithm for the most general case is ruled out when discussing the solution on a quantum computer.  Reducing the prefactor, therefore, becomes highly beneficial.  This does not preclude algorithms on more restricted systems, but there is no hint of how exactly to construct such an algorithm or that this is any easier than a straight-forward solution. 

The RWMP method avoids repeated measurement, reduces the prefactor to solve each system iteratively and allows for more systems to be solved with RTE or some other method.

\section{Conclusion}

It has been demonstrated that a combination of algorithms applied to a wavefunction on the quantum computer can yield the Kohn-Sham potential, energy, and density matrix coefficients without completely re-preparing the ground state wavefunction each time. The determined quantities can be used to train a machine learned model using gradient-based methods either on the quantum computer or classically.  The ground state wavefunction was used as the starting point for the next system, reducing the prefactor and avoiding an expensive computation of the ground state at each step.  This efficiency was also used for the Kohn-Sham potential with a minimization condition. 

Once a model is created, a classical user can extract the relevant quantities from the machine learned model and use it for ground state, time-, and temperature-dependent calculations.  Finding the Kohn-Sham potential is especially useful here since it gives access to many properties of the ground state; in addition, there was some indication that the Kohn-Sham potential might scale better in some cases as opposed to finding the density. Known limitations on the complexity of finding the universal functional and quantum machine learning have constrained the choice of subroutines in the algorithm here.    A better method to solve for the ground state on the quantum computer must be a focus of future research to make quantum chemistry studies feasible, but this algorithm will allow for solutions to exported to many users.

\section{Acknowledgements}

\begin{CJK*}{UTF8}{gbsn}
T.E.B. acknowledges funding provided by the postdoctoral fellowship from Institut quantique and Institut Transdisciplinaire d'Information Quantique (INTRIQ). This research was undertaken thanks in part to funding from the Canada First Research Excellence Fund (CFREF).  We thank useful discussions at the 2018 New Trends in Quantum Error Correction workshop at Universit\'e de Sherbrooke.  The authors are thankful for useful discussions with Frank Verstraete, Guillaume Duclos-Cianci, Anirban Narayan Chowdhury, Jonathan A.~Gross, Colin Trout, Li Li (李力), Yehua Liu, Vamsee Voora, Shane Parker, Raphael Ribeiro, Agustin Di Paolo, Anirudh Krishna, Maxime Tremblay, Jessica Lemieux, Benjamin Bourassa, Stuart Clark, Rex Godby, and Nikitas Gidopoulos.
\end{CJK*}

\begin{appendix}


\section{Quantum chemistry}\label{qchem}

In this section, we review some background information on quantum chemistry.

\subsection{Many-body problems}\label{manybody}

The problem of interest is to solve the many-body problem expressed by the Hamiltonian\cite{fetter2012quantum}
\begin{equation}\label{secondQHam}
\mathcal{H}=\sum_{ij\sigma}\left(t_{ij}\hat c^\dagger_{i\sigma}\hat c_{j\sigma}+\sum_{k\ell\sigma'}\left(V_{ijk\ell}\hat c^\dagger_{i\sigma}\hat c^\dagger_{j\sigma'}\hat c_{\ell\sigma'}\hat c_{k\sigma}\right)\right)
\end{equation}
with fermionic operators $\hat c$ on discretized lattice sites (or basis functions) indexed by $i,j,k,\ell\in\{1,\ldots,\Nb{}\}$ (for $\Nb{}$ basis functions) with spin $\sigma$. Note the order of indices.\cite{fetter2012quantum,raimes1972many}  The one-electron integral is
\begin{equation}\label{oneEl}
t_{ij}=\int\varphi_i^*(\mathbf{r})\left(-\frac12\mathbf{\nabla}^2+v(\mathbf{r})\right)\varphi_j(\mathbf{r})\; \dbr
\end{equation}
which is the kinetic plus external potential terms. The two-electron integral is
\begin{equation}\label{twoEl}
V_{ijk\ell}=\frac12\iint\varphi_i^*(\mathbf{r})\varphi_j^*(\mathbf{r}')v_\mathrm{ee}(\mathbf{r}-\mathbf{r}')\varphi_k(\mathbf{r})\varphi_\ell(\mathbf{r}')\; \dbr\,\dbrp
\end{equation}
where $v_\mathrm{ee}(\mathbf{r}-\mathbf{r}')=1/|\br{}-\br{}'|$ for the case of a Coulomb interaction and that this expression assumes the orbitals for both spin-up and spin-down electrons are the same.  Note that a Hubbard model is an approximation with only the most diagonally dominant terms of the Coulomb operator, $V_{ijk\ell}=U\delta_{ij}\delta_{jk}\delta_{k\ell}$ for Hubbard interaction $U$.\cite{hubbard1963electron} We have restricted our consideration to the Born-Oppenheimber approximation,\cite{born1927quantentheorie} even though the discussion can be generalized to the motion of nuclei.\cite{gidopoulos2014electronic}

Solving the entire many-body problem is known to be difficult if not impossible.  However, approximate methods can yield results that are accurate to what is known as chemical accuracy (1 mHa) or a stricter limit applies in some cases.\cite{helgaker2014molecular}

\subsection{Basis sets}\label{basis_sets}

We can note that Eq.~\eqref{manybody} has been written in the second quantized form since we expect to need a basis to truncate the problem to a more manageable size. One may, for example, choose Gaussian orbitals\cite{boys1950electronic} so that Eqs.~\eqref{oneEl} and \eqref{twoEl} can be evaluated analytically and chemical accuracy can be obtained with only a few functions.  Other basis functions can also be chosen.\cite{helgaker2014molecular}

It is known that Eq.~\eqref{twoEl}, when represented in a local basis, reduces to\cite{helgaker2014molecular}
\begin{equation}\label{localscaling}
\underset{\mathrm{loc.}}
\lim \;V_{ijk\ell}=\frac12\iint\left|\gamma_{ik}(\mathbf{r},\mathbf{r}')\right|^2v_\mathrm{ee}(\br{}-\br{}')\; \dbr\,\dbrp\sim\mathcal{O}(N^2)
\end{equation}
for a density matrix $\gamma$ where the limit is taken for well separated, local orbitals at large distances.  This reduces the computational complexity from $\mathcal{O}(N^4)$ to $\mathcal{O}(N^2)$ in the asymptotic limit, although the true scaling lies somewhere in-between depending on the details of the system.\cite{poulin2014trotter}  This argument only applies to orbitals that drop off sufficiently quickly with distance from the origin. 

\subsubsection{The curse of dimensionality and other limitations in point-like basis sets}

We note that the reduction in Eq.~\eqref{localscaling} happens immediately when using purely local basis sets, with no spatial extent, is used. In that case, we would reduce to a sum over only the diagonal elements of the two-electron integral, $V_{ijk\ell}$, if the orbitals were point-like.  However, using only these localized orbitals ({\it e.g.}, grid points, plane-waves, wavelets, etc.) comes at a steep price.  

In particular, note that wavelets are very expensive for large scale quantum chemistry problems. It has been known for some time now that a curse of dimensionality shows that the number of functions in one dimension scales as $N_\mathrm{1D}^d$ for $d$ dimensions with a number of functions in one-dimension, $N_\mathrm{1D}$.\cite{beylkin2002numerical}  Due to the large number of basis functions, wavelet based functions have only been able to solve 2 and 3 electron systems maximum in the general case,\cite{bischoff2012computing} although these functions can be efficient for larger noninteracting or single Slater-determinant theories or other cases of very particular interest.\cite{harrison2016madness} Wavelets are simply not expected to be efficient for real three-dimensional systems of any meaningful size based on pre-existing works unless the problem is converted to a noninteracting theory or a special geometry is chosen. For more information, see the references in Ref.~\onlinecite{bakerPRB18}.  

For plane-wave functions, many thousands of functions are required to resolve the electron-electron cusp ({\it e.g.}, the behavior of the wavefunction at the nucleus in a hydrogen 1s orbital).\cite{kimball1975short,klahn1984rates,hill1985rates,helgaker2014molecular}    We will not consider point-like basis functions further here to concentrate on the general case, although plane-waves can be useful for periodic systems. With respect to the density matrix (which is a highly important quantity in Sec.~\ref{DFT}), the full double sum will be taken (see Eq.~\eqref{densrho}) and not just diagonal elements. 

In summary, even though the scaling of the two-electron operator is better for point-like basis sets, many more basis functions will be required to obtain accurate results except in special cases.  So, choosing a point-like basis function will not represent a general strategy for all types of quantum chemical problems that we may wish to solve.

\section{Density functional theory}\label{DFT}

The foundations of density functional theory (DFT), including the Kohn-Sham system and other variants, are introduced here.

A compact representation of the quantum ground state is the one-body density.  In DFT, the ground state wavefunction is replaced with the density. It was proven in Ref.~\onlinecite{hohenberg1964inhomogeneous} that the one-body density, defined as
\begin{equation}
n(\br)=\int\ldots\int\left|\Psi(\br,\br_2,\ldots,\br_{\Ne{}})\right|^2 \dbr_2\ldots \dbr_{\Ne{}}
\end{equation}
is sufficient to characterize the ground state. Note that in order to obtain this quantity on the lattice, the one-body reduced density matrix must be obtained for $N_e$ electrons,
\begin{eqnarray}
\hat\rho(\br,\br')&=&\int\ldots\int\Psi^*(\br,\br_2,\ldots,\br_{N_e})\\
&&\hspace{1.5cm}\times\Psi(\br',\br_2,\ldots,\br_{N_e})\; \dbr_2\ldots \dbr_{N_e}\nonumber
\end{eqnarray}
and is related to the density in the limit where $\br\rightarrow\br'$
\begin{equation}\label{densrho}
n(\br)=\frac12\sum_{ij}\varphi_i^*(\br)\rho_{ij}\varphi_j(\br)
\end{equation}
where $\rho_{ij}=\langle\Psi| \hat c^\dagger_i\hat c_j|\Psi\rangle$.  A spin index has been suppressed, signifying a spin degenerate ground state.  However, extensions to ground states without spin degeneracy are also available.\cite{gross2013density}

Having replaced the wavefunction for the more compact density, the Hamiltonian must be substituted for another mathematical object that acts on the density.  In general, an object that maps a function to a scalar value is known as a functional.\cite{reed2012methods} In DFT, a functional maps the one-body density to a scalar energy value. 

In order to find the ground state energy, we can use a minimization over all densities\cite{levy1985constrained}
\begin{equation}\label{EDFT}
E=\underset{n}{\mathrm{min}}\left(F[n]+\int n(\br)v(\br)\; \dbr\right)
\end{equation}
although it is impractical to search for the ground state density with this formulation.
The second term in Eq.~\eqref{EDFT} is the external potential functional (often denoted as $V[n]$) and has a known form.  Contrastingly, the universal functional, $F[n]$, is defined as the search over all wavefunctions $\Psi$ constrained to give the density, \cite{engel2011density,gross2013density}
\begin{equation}\label{FDFT}
F[n]=\underset{\Psi\rightarrow n}{\mathrm{min}}\langle\Psi[n]|\hat T+\hat V_\mathrm{ee}|\Psi[n]\rangle.
\end{equation}
and is common to all systems since it does not depend on the external potential.  Clearly, the minimization is not an efficient way to find the functional, but it is useful as a mathematical tool.

Because $F[n]$ is unknown explicitly (its existence is proven by contradiction), it requires approximation to use.\cite{kohn1965self}  Some limiting cases are known, such as one- or two-electron cases, the uniform gas via a fitting procedure, and one-dimension.\cite{fermi1927metodo,thomas1927calculation,weizsacker1935theorie,ribeiro2015corrections}  Many exact properties of the functional are known from rigorous mathematical statements\cite{lieb1981improved,pittalis2011exact} or limited test cases.\cite{wagner2013guaranteed,fuks2015time}  One common way to design new functionals is to build in exact conditions.\cite{sun2015strongly,mori2008localization,cohen2008insights}

Note that to solve a problem with $F[n]$, the functional derivative can be used and is defined as\cite{engel2011density,gross2013density}
\begin{align}\label{funcderivdef}
\int\frac{\delta\Z[g(\x)]}{\delta g(\x)}\force(\x)\;\mathrm{d}\mathbf{x}&\equiv\underset{\eta\rightarrow0}\lim\Big(\frac{\Z[g(\x)+\eta\force(\x)]-\Z[g(\x)]}\eta\Big)\\
&=\left.\left(\frac{\mathrm{d}}{\mathrm{d}\eta}\Z[g(\x)+\eta\force(\x)]\right)\right|_{\eta=0}\label{epsderiv}
\end{align}
where $\force$ is an arbitrary test function, $g$ is the function we wish to evaluate $F$ around, and $\eta$ is a small parameter.  The first functional derivative is most well-known from classical physics where it is used to minimize the Lagrangian via Euler-Lagrange minimization.\cite{goldstein2014classical}

In order to find the minimal density, a functional derivative can be taken.  This is synonymous with the Euler-Lagrange equations in this case\cite{gross2012introduction}
\begin{equation}\label{ELDFT}
\frac{\delta F[n]}{\delta n(\br)}+v(\br)=\mu
\end{equation}
where a constant chemical potential $\mu$ was added as a Lagrange multiplier for the total particle number.  This equation is then used to solve for orbital-free DFT.

\subsection{Kohn-Sham density functional theory}\label{kohnsham}

One useful alternative formulation of $F[n]$ is KS-DFT.  This reformulation of DFT proposes an external potential whose solution resulting one-body density is equivalent to obtaining the one-body density of the fully interacting system.  The original goal of DFT was to propose a purely wavefunction-free method to characterize the quantum ground state, but it is difficult to find suitable approximations that are accurate enough.  

It can be noted that approximating $F[n]$ is a large approximation on the total energy.  In this alternative formulation, one introduces an easy to solve, noninteracting, auxiliary system to make the required approximation a smaller fraction of the overall energy.  Obtaining the KS potential gives insight to many more physical quantities than just the density, and the orbitals of the noninteracting system can be used in a variety of other contexts.

\subsubsection{Finding the Kohn-Sham potential}\label{findKS}

To formalize the KS system, what is known as the adiabatic connection can be used to transform from the original problem to the final noninteracting problem.\cite{seidl1999strong}  Equation~\eqref{secondQHam} can be rewritten as\cite{langreth1975exchange} 
\begin{equation}
\mathcal{H}^\lambda=\hat T+ \lambda \hat{V}_\mathrm{ee}+\hat{V}^{(\lambda)}
\end{equation}
where an express dependence on the coupling constant $\lambda$ has been introduced. The tuning parameter $\lambda$ can vary between the KS system ($\lambda=0$, where $\hat V^{(\lambda=0)}=\hat V\s$) and the original system ($\lambda=1$). Note that the external potential operator ($\hat V=v(\br)$) has received an implicit coupling constant dependence, but there is no simple analytic form for $\hat{V}^{(\lambda)}$.  The constraint given in this problem is that the density must be the same for any $\lambda$,
\begin{equation}\label{KSmatching}
n(\br)\equiv n^{(\lambda=1)}(\br)\overset!=n^{(\lambda=0)}(\br)
\end{equation}
which is difficult to construct in practice.  Note that the other limit of $\lambda\rightarrow\infty$ can also be used to base a functional theory.\cite{seidl1999strictly}

\subsubsection{Components of the functional in the Kohn-Sham system}

The form of the universal functional for the KS case is
\begin{equation}\label{FKS}
F[n]=T\s[n]+U[n]+E_\mathrm{xc}[n]
\end{equation}
This form is known from perturbative expansions of the many-body system.\cite{kohn1965self,fetter2012quantum} Note that the subscripted "s" on the kinetic energy is to signify that $T\s$ is evaluated over noninteracting wavefunctions, $\phi$, but has the same form as the same kinetic energy operator in Eq.~\eqref{oneEl}.  This term shows that the KS scheme is not a pure density functional but one that relies the auxiliary noninteracting orbitals.  The cost to solve the noninteracting system is larger than pure-DFT, but still smaller than many other approximations.  

In addition to the kinetic energy, another known energy in Eq.~\eqref{FKS} is the Hartree energy,
\begin{equation}
U[n]=\frac12\iint\frac{n(\br)n(\br')}{|\br-\br'|} \dbr\, \dbrp
\end{equation}
which is fully nonlocal.

The unknown term in Eq.~\eqref{FKS} is the exchange-correlation energy, $E_\mathrm{xc}$, which is not known as a density functional and requires approximation in practice.  If the exact $E_\mathrm{xc}$ is used, then the theory is exact.  The usefulness of defining the KS system is that the approximation to the total energy is small for many systems of practical interest. 

\subsubsection{Kohn-Sham potential by functional derivatives}

The KS potential is explicitly
\begin{eqnarray}\label{vks}
v\s(\br)&=&\frac{\delta}{\delta n} \Big(U[n]+E_\mathrm{xc}[n]+\int n(\br)v(\br)\;\dbr\Big)\\
&=&v_\mathrm{H}[n](\br)+v_\mathrm{xc}[n](\br)+v(\br)\nonumber
\end{eqnarray}
where a functional derivative is taken over the relevant energy terms, for example,
\begin{equation}
v_\mathrm{H}[n](\br)=\frac{\delta U[n]}{\delta n}=\int\frac{n(\br')}{|\br-\br'|}\dbr'
\end{equation}
for the Hartree potential, and the form of $v_\mathrm{xc}[n](\br)$ is not known explicitly. In summary, by re-grouping the non-kinetic energy terms in the Hamiltonian, $U[n]+V[n]+E_\mathrm{xc}[n]$, the resulting system will appear as noninteracting.  The electron-electron term is contained in the resulting potential of the noninteracting system.

\subsubsection{Variational principle for the Kohn-Sham potential}\label{variationalKSdetails}

The Kohn-Sham potential also satisfies the minimization of the quantity\cite{gidopoulos2011progress}
\begin{equation}\label{KSmin}
T_\Psi[v\s]=\langle\Psi[v]|\hat T+\hat{V}\s|\Psi[v]\rangle-\langle\Phi[v\s]|\hat T+\hat{V}\s|\Phi[v\s]\rangle
\end{equation}
where we follow Refs.~\onlinecite{gidopoulos2011progress,callow2018optimal,callow2020density} closely. Note that $\Psi(\br,\br_2,\ldots,\br_{N_e})$ is not an eigenstate of $\hat T+\hat{V}\s$ but that $\Phi(\br)$ is. So,
\begin{equation}
\langle\Psi[v]|\hat T+\hat{V}\s|\Psi[v]\rangle>\langle\Phi[v\s]|\hat T+\hat{V}\s|\Phi[v\s]\rangle
\end{equation}
by the variational principle.\cite{townsend2000modern} The functional derivative of Eq.~\eqref{KSmin} with respect to $v\s(\br)$ is\cite{gidopoulos2011progress}
\begin{equation}
\frac{\delta T_\Psi[v\s]}{\delta v\s}=n_\Psi(\br)-n_\Phi(\br)
\end{equation}
and equals the difference in the densities of the two systems, one computed from $\Psi$ ($n_\Psi$) and the other density from $\Phi$ ($n_\Phi$).\cite{gidopoulos2011progress} When this difference is zero, the condition for the Kohn-Sham potential is found given in Eq.~\eqref{KSmatching}.

\subsubsection{$v$-representability}

The KS scheme is exactly defined provided that $v$-representability is satisfied.  In common practice, this is not a concern since it was proven on a grid that the system must be $v$-representable since the kinetic energy is regularized.\cite{kohn1983v,chayes1985density} So, we always expect $v$-representability here.

\subsubsection{Minimization of the Kohn-Sham functional}\label{minKS}

Note that the Euler-Lagrange minimization of the functional yields the KS equations
\begin{equation}
\left(-\frac{\nabla^2}2+v\s(\br)\right)\phi_j(\br)=\epsilon_j\phi_j(\br)
\end{equation}
for some KS energy eigenvalues $\epsilon_j$ and KS orbitals $\phi_j(\br)$.  The density is then the sum over occupied orbitals equivalent to 
\begin{equation}
n(\br)=\sum_{j\in\mathrm{occ.}}|\phi_j(\br)|^2
\end{equation}
which can be found from Eq.~\eqref{densrho} by noting that the excitations are orthogonal.  One recovers Eq.~\eqref{densrho} with an additional index for the excitations when $\phi$ is decomposed into a chosen basis.

\subsubsection{Relationship between the energies of the Kohn-Sham and the fully interacting system}

The adiabatic connection from Sec.~\ref{findKS} does not conserve energy.  The relation between the ground state energy of the interacting system, $E$, and the energy of the KS system (the sum of eigenvalues of the noninteracting system, $\sum_{j\in\mathrm{occ.}}\epsilon_j$) is\cite{engel2011density}
\begin{equation}\label{KSEdiff}
E=\sum_{j\in\mathrm{occ.}}\epsilon_j-U[n]+E_\mathrm{xc}[n]-\int n(\br)v_\mathrm{xc}(\br) \dbr
\end{equation}
for Hartree energy $U$, exchange correlation energy $E_\mathrm{xc}$, and exchange-correlation potential $v_\mathrm{xc}$, Note that Eq.~\eqref{KSEdiff} shows it is not sufficient to have only the KS potential to find $E$, although perturbation theory on the density can be used.\cite{gorling1994exact}

\subsection{Potential functional theory}\label{PFT}

When examining Eq.~\eqref{EDFT}, it is natural to ask if a dual theory can be formulated based on $v(\br)$ instead of $n(\br)$ since both are one-body quantities.  This question stems from noticing that functional derivatives of $n(\br)$ yield equations that can be solved for the density, resulting in the Euler-Lagrange minimization for the density functional from Eq.~\eqref{ELDFT}.

To the question: can we instead take a functional derivative with respect to $v(\br)$ instead?  The answer is yes.  It was proven in Ref.~\onlinecite{yang2004potential} that the dependence on the functional in terms of the external potential was sufficient to describe the ground state.  In this theory, the density must be determined from $v(\br)$ directly as $n(\br)\rightarrow n[v](\br)$.  The resulting energy becomes
\begin{equation}\label{EPFT}
E=\underset {n[v]}{\mathrm{min}}\left(F[v]+\int n[v](\br)v(\br)\; \dbr\right)
\end{equation}
where 
\begin{equation}
F[v]=\underset{\Psi\rightarrow n[v]}{\mathrm{min}}\langle\Psi[v]|\hat T+\hat V_{ee}|\Psi[v]\rangle
\end{equation}
which is similar to Eq.~\eqref{FDFT}.

\subsubsection{Why the functional derivative is also necessary}

Very importantly, one cannot determine the entire character of the ground state ({\it i.e.}, find $n(\br)$ or equivalent) with only $E[v]$.  To see this, note that the Euler-Lagrange equation for potential functional theory is\cite{cangi2013potential}
\begin{equation}\label{PFTeq}
\frac{\delta F[v]}{\delta v}+\int\frac{\delta n[v](\br)}{\delta v(\br')}v(\br')\;\mathrm{d}\br'=0
\end{equation}
where a derivative of the density with respect to the external potential in the second term of the left-hand side must be determined to solve this equation and find the density. In summary, one can formulate potential functionals, $E[v]$, provided that $n[v](\br)$ is known.\cite{gross2009adiabatic,cangi2011electronic} In other words, the functional derivative is also necessary to perform self-consistent calculations if only the energy is known for a given potential.

\subsection{Time-dependent density functional theory}\label{tddft}

In a time-dependent DFT (TD-DFT), one may simply propagate the KS potential according to Schr\"odinger's equation for time evolution\cite{gross2012introduction,ullrich2011time}
\begin{equation}
i\frac\partial{\partial t}\phi_j(\br{},t)=\left(-\frac{\nabla^2}2+v\s[n,\Phi_0](\br{},t)\right)\phi_j(\br{},t)
\end{equation}
with an initial starting state $\Phi_0$. A formal justification for the existence of TD-DFT is available.\cite{runge1984density,van1998causality}

Computing response functions is also necessary if a perturbation to $v\s(\br)$ is applied.  Knowing just the KS orbitals is sufficient to determine the response function for the KS system,\cite{ullrich2011time}
\begin{align}\label{chis}
\chi\s(\br{},\br{}',\omega)=&\underset{\eta\rightarrow0^+}\lim\sum_{k,j=1}^\infty(\xi_k-\xi_j)\frac{\phi_k^*(\br)\phi_j(\br)\phi_j^*(\br')\phi_k(\br')}{\omega-(\epsilon_j-\epsilon_k)+i\eta}
\end{align}
with occupation numbers $\xi_j$, eigenvalues $\epsilon_j$, frequency $\omega$, and small parameter $\eta$.  Hence, knowing all eigenvalues of the $v\s(\br)$ at $t=0$ gives the KS response function.  One can also relate $\chi\s$ to the interacting response function $\chi$ via a kernel ($n_\mathrm{g.s.}$ is the ground state density)
\begin{equation}
f_\mathrm{xc}[n](\br{},t,\br{}',t')=\left.\frac{\delta v_\mathrm{xc}[n](\br,t)}{\delta n(\br',t')}\right|_{n=n_\mathrm{g.s.}}
\end{equation}
and the relation
\begin{equation}
f_\mathrm{xc}(\br{},\br{}',\omega)=\chi\s^{-1}(\br,\br',\omega)-\chi^{-1}(\br,\br',\omega)-v_\mathrm{ee}(\mathbf{r}-\mathbf{r}')
\end{equation}
which is similar to a Dyson's equation. Many cases of interest obtain sufficiently accurate answers with only the adiabatic approximation, however.  TD-DFT can be used to find excited states.\cite{elliott20093}

\subsection{Density functional theory at finite temperature}\label{TDFT}

In order to incorporate finite temperature effects into the density functional, an entropy term can be added following the original treatment by Mermin,\cite{mermin1965thermal} one can write the grand canonical free energy as
\begin{equation}
\hat\Omega=\mathcal{H}-\tau\hat S-\mu\hat N
\end{equation}
for temperature $\tau$, chemical potential $\mu$, number operator $\hat N$, and entropy operator
\begin{equation}
\hat S=-k_B\ln\hat \Gamma
\end{equation}
where \cite{pribram2014thermal}
\begin{equation}
\hat \Gamma=\sum_{N_e,i}p_{N_e,i}|\psi_{N_e,i}\rangle\langle\psi_{N_e,i}|
\end{equation}
with $\sum_{\{N_e,i\}}p_{N_e,i}=1$, $0\leq p_{N_e,i}\leq1$, and states $\psi_{N_e,i}$ indexing excitations over a particular number of particles $N_e$.

The minimum of $\hat \Omega$ is (and adding descriptive indices)
\begin{equation}
\hat\Omega^\tau_{v,\mu}=\underset{n}{\mathrm{min}}\left\{F^\tau[n]+\int n(\br)(v(\br)-\mu)\dr\right\}
\end{equation}
and
\begin{equation}
F^\tau[n]\equiv\underset{\hat\Gamma\rightarrow n}{\mathrm{min}}\left\{T[\hat\Gamma]+V_\mathrm{ee}[\hat\Gamma]-\tau S[\hat\Gamma]\right\}
\end{equation}

In summary, if a system is solved at a given temperature, one can solve for the KS potential analogously to the ground state with an extra term $-\tau\hat S$ (and a term for the particle number) representing the entropy in the functional.  Note that the ground state density is replaced by the $\hat\Gamma$ object which is akin to a density matrix and that extra weights must be solved.  In order to find this, several states $\psi_{N_e,i}$ must be used.

\subsection{Comment on density functional approximations}

There are many functionals that can be used to approximate $E_\mathrm{xc}$. Each performs with its own set of systematic deficiencies.

The most pertinent approximations for this paper are the ML functionals.\cite{SRHM12,LBWB16,brockherde2017bypassing,bogojeski2019density,bogojeski2018efficient,nagai2018neural,li2016understanding,denner2020active,nagai2020completing,manzhos2020machine,suzuki2020machine,wetherell2020insights} The general strategy of fitting a functional may be unpalatable,\cite{medvedev2017density} but the generic strategy of fitting exact data is not unique to ML functionals.  The simplest approximation to the functional--known as the local density approximation--is a fit of highly accurate quantum Monte Carlo data.\cite{ceperley1980ground,vosko1980accurate} Further, coefficients present in hybrid functionals are also fit to existing data,\cite{becke1993new} among other examples.  The ML functionals simply represent a more robust approximation that can interpolate well provided the system solved is close to the training manifold.  This strategy would capture the exact conditions of the exact functional.\cite{LBWB16,hollingsworth2018can,perdew1982density,engel2011density,gross2013density}


\section{Training a machine learning model with stochastic gradient descent}\label{SGD}

Machine learning methods rely on the minimization of some cost function.  Here we describe the most basic version of this, the stochastic gradient descent (SGD).  If a function must be minimized, some procedure for the minimization is necessary.  We can define a cost function, $\MLcostfct{}$, that could take the form
\begin{equation}\label{costfct}
\MLcostfct{}=\sum_{\iota,\mathbf{x}}\left(n_\mathbf{\hat w}(\mathbf{x}_{(\iota)})-n_{(\iota)}\right)^2
\end{equation}
for some observable $n$ with known values of $n_{(\iota)}$ (called a training set) indexed by $\iota$ with some coefficients for the weight $\mathbf{\hat w}$ and bias $\mathbf{b}$ in the form
\begin{equation}\label{costform}
\mathbf{x}^{(i+1)}= \mathscr{S}\left(\mathbf{\hat w}^{(i)}\cdot\mathbf{x}^{(i)}+\mathbf{b}^{(i)}\right)
\end{equation}
for a level $i$ of the neural network with a nonlinear function $\mathscr{S}$ with a given input $\mathbf{x}_{(\iota)}$. The final level of the neural network will be the final quantity of interest, $n_\mathbf{\hat w}(\mathbf{x}_{(\iota)})$.

In order to minimize $\MLcostfct{}$ and therefore construct the best approximation to the known values, a gradient descent can be performed. The basic idea is to ensure that any evolution of the coefficients $\mathbf{\hat w}$ occur along the steepest negative gradient in the system.  In order to ensure that the gradient is negative, we can start from a consequence of Gauss's law which states that a gradient of a scalar (here, $\MLcostfct{}$) along the direction of changing $\mathbf{\hat w}$ (denoted $\delta\mathbf{\hat w}$) is equivalent to the Laplacian of $\MLcostfct{}$, $\Delta \MLcostfct{}={\boldsymbol{\nabla}}_{\mathbf{\hat w}}\MLcostfct{}\cdot\delta\mathbf{\hat w}$ where ${\boldsymbol{\nabla}}_{\mathbf{\hat w}}\MLcostfct{}=\left(\partial_{w_1} \MLcostfct{},\partial_{w_2}\MLcostfct{},\ldots\right)$.

If the form of the update from iteration over a time $\delta t$ (which can also be expressed in the discrete case as earlier) is chosen as
\begin{equation}\label{paramupdate}
\mathbf{\hat w}_{t+\delta t}=\mathbf{\hat w}_t-\eta{\boldsymbol{\nabla}}_{\mathbf{\hat w}}\MLcostfct{}
\end{equation}
for some small parameter $\eta$ (called a learning rate), then then applying ${\boldsymbol{\nabla}}_{\mathbf{\hat w}}\MLcostfct{}$ to each side with a dot product gives $\Delta \MLcostfct{}=-\eta\left|{\boldsymbol{\nabla}}_{\mathbf{\hat w}}\MLcostfct{}\right|^2<0$. In other words, the Laplacian of $\MLcostfct{}$ is guaranteed to be negative for a small enough $\eta$ such that $\MLcostfct{}$ is linear on a small neighborhood. The argument was shown for the weights of the neural network, but the same argument applies to the biases.

Evaluating Eq.~\eqref{costfct} for all $\iota$ provided can be very costly since the resulting gradient is very noisy.  To speedup the gradient descent, a randomly sampled subset of the provided training data, called a mini-batch.  This can be one or more selected points.  Since the points are randomly selected at each step, the gradient descent is taken stochastically.  

Even though the cost function evaluated over the mini-batch must be negative, the entire cost function evaluated over all training points does not need to decrease.  However, on average, the observable's value is lowered over several SGD steps.  The condition is $\left\langle\partial_t\MLcostfct{}\right\rangle<0$ for a mini-batch in the general case but would be $\partial_t\MLcostfct{}<0$ if all points are used.

There are two steps required when training a ML model with SGD: forward and backward propagation.  In forward propagation, Eq.~\eqref{costform} is applied straightforwardly from the input to the output layers of the neural network.  The backward propagation requires that a derivative of Eq.~\eqref{costform} from the output to input layers.  Repeating this constructs the cost function and then applies the gradient to all weights and biases in the network until the model is more converged.

More advanced algorithms can also be used to converge gradient descents as well.\cite{ruder2016overview}

\subsection{Gradient-free methods}\label{gradfree}

This section has been focused on training ML models with gradients.  Another class of method, one that uses random walks,\cite{vcerny1985thermodynamical,bertsimas1993simulated,tierney1994markov,vishwanathan2006fast,neal2012bayesian,rios2013derivative,pillai2014noisy,perozzi2016walklets,tejedor2018dimensional,ghosh2018journey} is also available.  However, these methods of training that involve a random walk typically only perform well on small numbers of parameters.  In fact, these methods perform better than gradients in some cases for these small systems.  If the problem is too large, then a gradient-based method is generally better.\cite{rios2013derivative}  There may also be opportunities to combine the two,\cite{tejedor2012optimizing,maclaurin2015gradient} although this may require that the wavefunction is re-solved to implement on the quantum computer.  We were not able to find any evidence that random walks can reliably compete with gradient-based training methods; however, if one could use such an algorithm, a quadratic speedup is available on the quantum computer for training with the random walk.\cite{szegedy2004quantum}

We also note that kernel based methods can be used to train the neural network but that they are not as ``choice-free'' since a functional form of the kernel must be selected. The minimization of the coefficients with a kernel does not require gradients. \cite{vu2015understanding}

\section{Quantum algorithms}\label{qalgs}

On a quantum computer, both the method of manipulating and storing information is different from a classical computer.  On a classical computer, electrons are moved around and represent different information based on where they are placed.

In a quantum computer, the quantity that we manipulate is the spin of an electron or some other quantity that is allowed to exist in a superposition of states.\cite{nielsen2010quantum}  We will refer to qubits in this work but note that one can extend the ideas to qudits where more than two states are possible.  It is not possible to determine all coefficients of a wavefunction in a superposition without an exponential number of measurements to find them ({\it i.e.}, measuring the spin for the both $|0\rangle$ and $|1\rangle$ states for each qubit individually).  To manipulate the state of a quantum wavefunction, we can apply operators, specifically operators that are unitary.  Often, the operators are applied to a specified qubit and also another auxiliary qubit to keep required operations unitary.

Sequences of these unitary operators can be cast as a tensor network diagram.  Each line of the diagram represents a single qubit or a group of qubits called a register.  Each block is a unitary operation that manipulates the state of one or more qubits.  Note that the allowed operations on the quantum computer maintain the number of lines and do not involve truncations of the space at any point.  This is qualitatively different from other uses of tensor network diagrams used for ``tensor network methods'' which can refer to a class of algorithms that solve for the ground state of a quantum system on classical computer.\cite{baker2019m} So, there is a subtle distinction between the tensor network methods and the tensor network diagrams we draw here even though the general properties of the diagrams is the same in both.  The main difference here is that no form of truncation or a connection with a renormalization group is taken explicitly.

One example of a useful gate that will appear in several places is the Hadamard gate,
\begin{equation}\label{hadamard}
H=\frac1{\sqrt2}\left(\begin{array}{cc}
1 & 1\\
1 & -1
\end{array}\right)
\end{equation}
which is written in the $\{|0\rangle,|1\rangle\}$ qubit basis states.  Applying this gate on a $|0\rangle$ state will give an equal superposition over the $|0\rangle$ and $|1\rangle$ states, 
\begin{equation}
H|0\rangle=\left(|0\rangle+|1\rangle\right)/\sqrt2
\end{equation}  
which can be applied identically to more qubits as denoted by the $\otimes$ operator.

We present some common algorithms in the context of solving quantum chemistry problems.\cite{blais2003algorithmes}  The quantum gradient algorithm (QGA) is used to find derivatives of a function, generically.  Meanwhile, the quantum phase estimation (QPE) determines the phase of a given state.  An important sub-algorithm in the QGA and QPE is the quantum Fourier transform (QFT) which is detailed first.\cite{nielsen2010quantum}  Real time evolution (RTE) in addition to quantum amplitude estimation (QAE) are also discussed.

Note that everywhere we use the symbol $N$ for the number of qubits in this section.  Many times in the literature, the number of qubits may also  implicitly mean $N$ registers with $r$ qubits each for $r$ digits of precision.  If this is the case, the same concepts would apply, regardless.

\subsection{Quantum Fourier transform}\label{QFT}

The QFT begins with a set of input data recorded on an initial set of registers, which we denote as $|y\rangle$.  The end result of the QFT is to change the data $|y\rangle$ into $|x\rangle$ according to the discrete Fourier transform\cite{kittel1987quantum}
\begin{equation}\label{qft}
|x\rangle=\frac1{2^{N/2}}\sum_{y=0}^{2^N-1}\exp(i2\pi xy/2^N)|y\rangle
\end{equation}
for $N$ qubits with a normalization factor $2^{-N/2}$ coming from the prefactor in Eq.~\eqref{hadamard}. The key first step is to understand how Eq.~\eqref{qft} can be re-expressed in binary instead of integers integer $y$.  A number $y$ can be re-expressed in base 2 with the digits (assuming value either 0 or 1, corresponding to the states of a qubit) $y_0y_1\ldots y_{N-1}$ where $N$ is the maximum number of (qu)bits we allow for the binary number.  In full, $y$ relates to the binary digits as
\begin{equation}\label{ybinary}
y=2^{N-1}y_0+2^{N-2}y_1+\ldots+2y_{N-2}+y_{N-1}
\end{equation}
A similar form can be written for using a qudit where a different basis is used ({\it e.g.}, trinary).

To express the binary representation of $|y\rangle$ as qubits, let us express the state as
\begin{equation}\label{ybits}
|y\rangle=|y_0\ldots y_{N-1}\rangle=|y_0\rangle\otimes\ldots\otimes|y_{N-1}\rangle\equiv\bigotimes_{\ell=0}^{N-1}|y_\ell\rangle
\end{equation}
with each of the sub-indices representing another digit of $y$'s binary representation ($y_\ell\in\{0,1\}$).

\begin{figure}[t]
\includegraphics[width=\columnwidth]{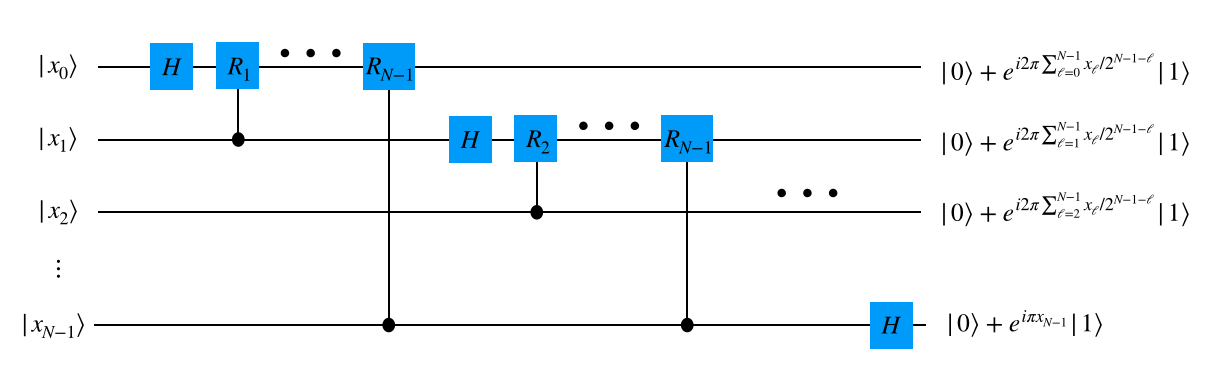}
\caption{Circuit diagram of the quantum Fourier transform. Note that the rotation operator is about the $z$ axis, and the Hadamard is about the $x$ axis.  So, they cannot commute nor can they be placed in a different order.\label{QFTcircuit}}
\end{figure}

We can substitute $y$ for its binary representation as using the previously defined expressions in Eq.~\eqref{ybinary} to find \cite{nielsen2010quantum}
\begin{align}\label{qft_binary}
|x\rangle&=\frac1{2^{N/2}}\bigotimes_{\ell=0}^{N-1}\sum_{y_\ell=0}^1\exp\left(i\pi xy_\ell/2^\ell\right)|y_\ell\rangle
\end{align}
where the sums in Eq.~\eqref{qft} are rewritten for the binary representation as
\begin{equation}\label{binarysum}
\sum_{y=0}^{2^N-1}\equiv\bigotimes_{\ell=0}^{N-1}\sum_{y_\ell=0}^1=\sum_{y_0=0}^1\sum_{y_1=0}^1\ldots\sum_{y_{N-1}=0}^1
\end{equation}
for simplicity.  Noticing that each term is factorizable, this becomes
\begin{align}
|x\rangle=\frac1{2^{N/2}}\bigotimes_{\ell=0}^{N-1}\Bigg\{|0\rangle+\exp\left(i\pi x/2^\ell\right)|1\rangle\Bigg\}\label{qftbit}
\end{align}
where for a given $\ell$, the division by a power of 2 in the argument of the exponential will reveal one digit of the binary representation of the integer $x$.\cite{nielsen2010quantum}

The operators necessary to perform a QFT on a quantum computer can be identified in Eq.~\eqref{qftbit}.  Note that applying the Hadamard gate on a qubit written in an arbitrary binary form is
\begin{equation}
H|x_k\rangle=\left(|0\rangle+\exp(i\pi x_k)|1\rangle\right)/\sqrt2
\end{equation}
as can be verified by noting that $x_k$ is either 0 or 1.  So, a Hadamard will rotate each bit of $x$ into a basis that is only different from an individual term in Eq.~\eqref{qftbit} by a phase on the $|1\rangle$ state.  If we apply the phase rotation gate of the form
\begin{equation}\label{rotation}
R_\ell=\left(\begin{array}{cc}
1 & 0\\
0 & e^{i\pi/2^\ell}
\end{array}\right)
\end{equation}
after the Hadamard, then this will constitute the most basic operation in the QFT. That is, applying $H$ and then $R_\ell$ a certain number of times will generate terms required in Eq.~\eqref{qftbit}.  More than one $R_\ell$ gate may need to be applied depending on which bit of $x$ we are acting on.  The structure of gates is shown in Fig.~\ref{QFTcircuit}. The next qubit has all but the first register's rotation matrix applied.  We then continue to the next qubit, applying one less set of gates, and continue until we apply only the Hadamard on the last qubit.  If a swap operation is applied, the qubits will appear in order and this completes the QFT.

Note that the overall cost of this algorithm scales as $\mathcal{O}(N^2)$ but that a cheaper $\mathcal{O}(N\log N)$ is also available.\cite{hales2000improved}  Note also that since we cannot efficiently obtain all coefficients from the superposition, the algorithm does not provide a useful speedup over the classical algorithm which scales exponentially, not polynomially.  However, the time to measure all elements is exponentially long, so this quantum advantage is not truly advantageous.  Still, the QFT can be useful as a tool in other subroutines.

\subsection{Quantum phase estimation}\label{QPE}

Given an input state, $|\psi\rangle$, we want to determine the associated eigenvalue of the form $\exp(i2\pi\varphi)$ for some real $\varphi$ denoting a phase. For this algorithm, we must first have the initial state, $|\psi\rangle$, and a number of auxiliary qubits at least equal to the number of digits that the phase must be accurate to. 

\begin{figure}[b]
\includegraphics[width=\columnwidth]{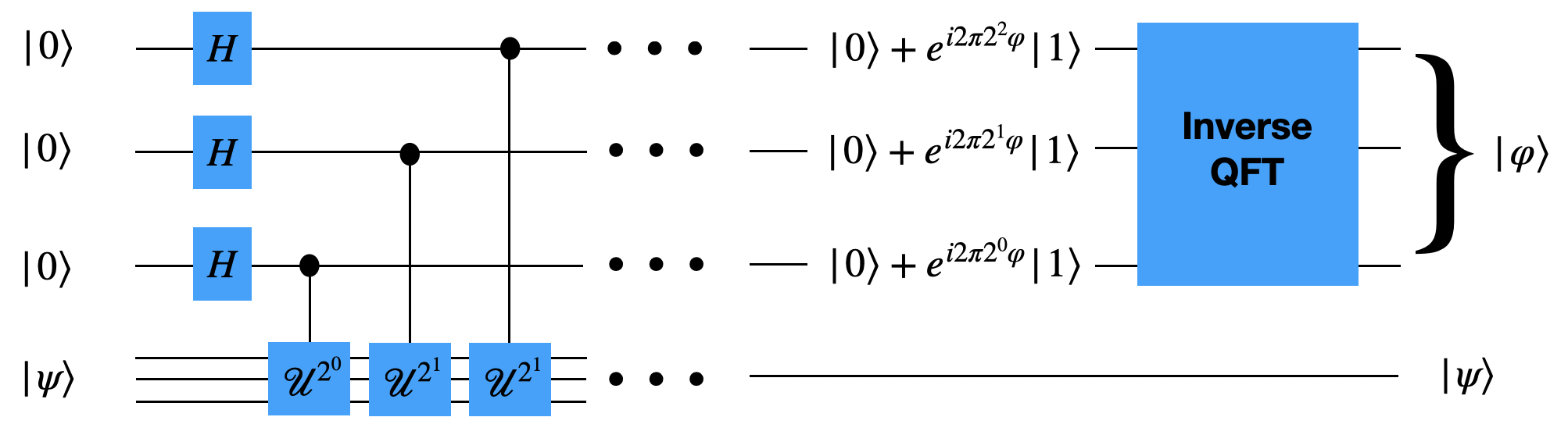}
\caption{Circuit diagram of the quantum phase estimation algorithm The input is an auxiliary register and the initial wavefunction.  The output is the phase corresponding to the eigenvalue and the original waevfunction.  Essentially, one obtains the associated energy for an input wavefunction.\label{QPEcircuit}}
\end{figure}

The strategy will be to generate, from an initial wavefunction $\psi$, the binary digits of the phase of $\varphi$--as defined in Appendix.~\ref{QFT}--and then perform an inverse QFT to obtain the phase on an auxiliary register.

Let a gate $\mathcal{U}[=R_1$ from Eq.~\eqref{rotation}] be one such that when applied to $\psi$, and controlled on one of the auxiliary qubits, it produces a phase that is the $j^\mathrm{th}$ value of the binary representation,
\begin{equation}\label{cU}
\mathcal{U}^{2^j}H|0\rangle_j|\psi\rangle=\frac1{\sqrt2}\left(|0\rangle+e^{i2\pi2^{t-j}\varphi}|1\rangle\right)_j|\psi\rangle
\end{equation}
where $|0\rangle_j$ is the $j^\mathrm{th}$ qubit and $t$ is the total number of qubits that this operator will be applied to.   

The gate $\mathcal{U}$ is applied as in Eq.~\eqref{cU} to a register of auxiliary qubits and to $\psi$ as in Fig.~\ref{QPEcircuit}.  The resulting state is then
\begin{equation}
\frac1{2^{\t{}/2}}\bigotimes_{j=1}^\t{}\left(|0\rangle+e^{2\pi i2^{t-j}\varphi}|1\rangle\right)=\frac1{2^{t/2}}\sum_{y=0}^{2^t-1}e^{i2\pi\varphi y}|y\rangle
\end{equation}
where the same conversion to and from a binary representation in Appendix.~\ref{QFT} is used. Once an inverse Fourier transform is applied on the last step, we obtain $|\varphi\rangle$ and have then represented the phase on a set of qubits.

In order to construct the unitary operator $\mathcal{U}$, we simply must obtain the representation of the exponentiated Hamiltonian as $\exp(-i\mathcal{H}t)$.\cite{whitfield2011simulation}  The energy of the wavefunction ($\varphi$ related to $E$) is then related to the time applied and may involve other pre-determined constants.\cite{whitfield2011simulation}

In order to apply the exponentiated Hamiltonian, one option is to use the Trotter-Suzuki decomposition\cite{suzuki1993improved} to decompose the exponential into a product of exponentials with fewer terms.\cite{poulin2014trotter}

By expanding the number of qubits used in the auxiliary register, we can increase the accuracy of the final result.  We will defer detailed analysis of this point to Ref.~\onlinecite{nielsen2010quantum}, but it is worth noting that some error in this algorithm can be reduced with more resources.  The total error can be reduced arbitrarily to 1-$\eta$ for some small number $\eta$.

Note that improvements can be applied to the algorithm to generate more methods of QPE.\cite{poulin2018quantum} A recent improvement known as qubitization can bring down the gate-count for the determining the phase.  We will defer to the discussion in Ref.~\onlinecite{low2019hamiltonian}.  

\subsection{Quantum gradient algorithm}\label{QGA}

Given an oracle for a function $f$, its gradients can be computed in one query of the oracle instead of $m+1$ classically for $m$ grid points.\cite{jordan2005fast}

We start with three multi-qubit registers.  One is used for the computation of $f$.  The other contains an equal superposition over all states (representing the infinitesimal directions that $f$ can be shifted for an eventual gradient).  The last receives a QFT (on initial register set to 1 while the others are set to zero). This final register will be used for a phase kickback.  The purpose of the phase kickback is to modify the phases of the QFT.  When the inverse QFT is applied, we obtain the gradient similar to how the phase was obtained for the QPE.

The steps of this algorithm are shown in Fig.~\ref{QGAcircuit}.  The Hadamard gate produces
\begin{equation}\label{hadamardaction}
H^{\otimes N}|0\rangle^{\otimes N}=\frac1{\sqrt{2^{N}}}\bigotimes_{\ell=1}^{N}\sum_{\delta_\ell=0}^1|\delta_\ell\rangle\equiv\frac1{\sqrt{2^{N}}}\sum_{\boldsymbol{\delta}}|\boldsymbol{\delta}\rangle
\end{equation}
where $N$ is the number of qubits contained in the second register.  

When calling the oracle query as controlled on the equal superposition in Eq.~\eqref{hadamardaction}, $f$ is evaluated on all arguments $\delta_\ell$ giving $f(\boldsymbol{c}+\boldsymbol{\delta})$, having perturbed an initial coordinate by $\boldsymbol{\delta}$ or some similar function.\cite{gilyen2019optimizing}  

\begin{figure}[b]
\includegraphics[width=\columnwidth]{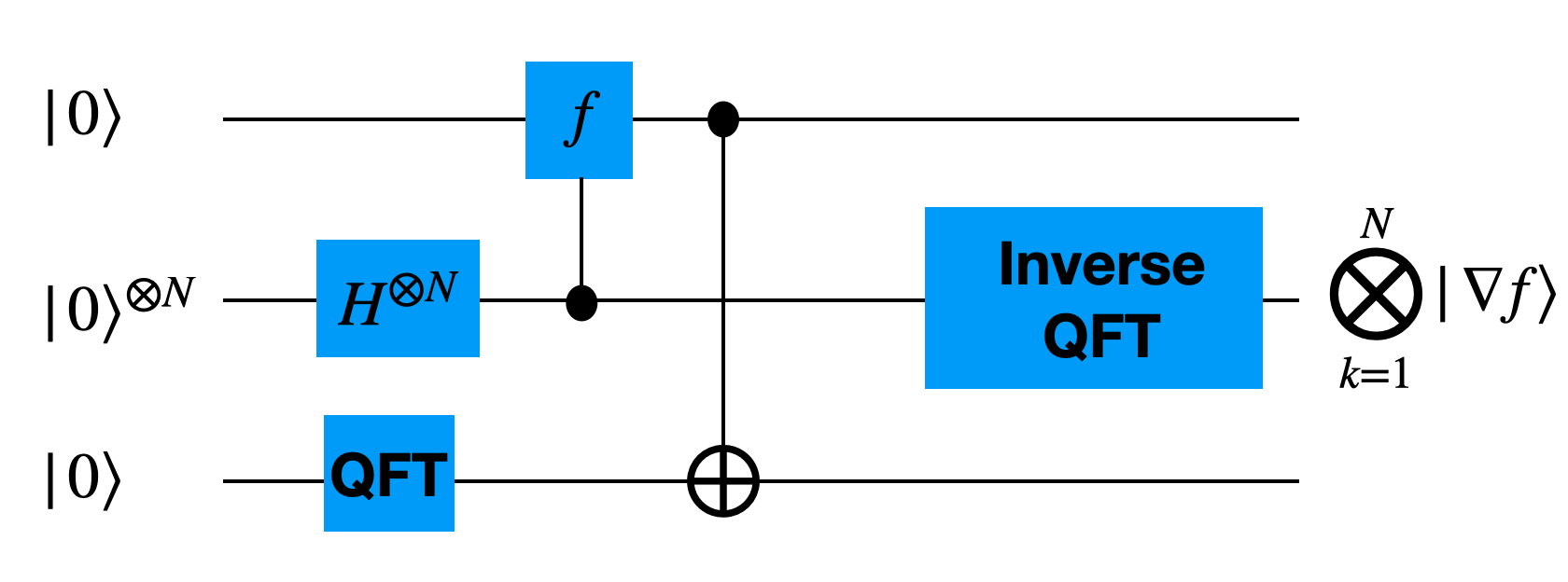}
\caption{Circuit diagram of the quantum gradient algorithm  One could write the function $f$ with a control to a fourth register with input $c$ for the point that $f$ is evaluated on.\label{QGAcircuit}}
\end{figure}

On the third register in Fig.~\ref{QGAcircuit}, a QFT has been applied on an initial value of 1, giving an output of 
\begin{equation}
\frac1{2^{\Np/2}}\sum_w e^{i2\pi w/2^{\Np}}|w\rangle
\end{equation}
which is similar to Eq.~\eqref{qft}. The number of qubits in this register, $\Np$, are enough to allow a pending bitwise addition to be carried out properly.

The state of the full quantum wavefunction is then
\begin{equation}
|\psi\rangle=\frac1{\sqrt{2^{N}2^{\Np}}}\sum_{\boldsymbol{\delta}}\sum_w e^{i2\pi w/2^{\Np}}|w\rangle|f(\boldsymbol{c}+\boldsymbol{\tilde\delta})\rangle|\boldsymbol{\delta}\rangle
\end{equation}
where
\begin{equation}
\boldsymbol{\tilde\delta}=L(\boldsymbol{\delta}-\mathbf{N}/2)/2^N
\end{equation}
and $\boldsymbol{N}$ is a vector of $(2^N,2^N,\ldots)$ and provides the offset factor to convert integers to real numbers. $M$ and $L$ are assigned meaning in the following. The factor $M$ scales the maximum amount of $\boldsymbol{\nabla} f$ to keep this quantity expressed as an integer.  The parameter $L$ is a neighborhood over which the derivative is accurate to first order, for example in one-dimension
\begin{equation}\label{derivative}
\partial_x f\approx\frac{f(x+L/2)-f(x-L/2)}L
\end{equation}
is the decomposition in one-dimension.

The next step is to add (bitwise addition denoted by $\oplus$) the first register to the third under the addition $w\rightarrow w\oplus(2^N2^{\Np} f)/(ML)\;\mathrm{mod}\;2^{\Np}$ so that the register and phase are shifted as
\begin{align}
|\psi\rangle&=\frac1{\sqrt{2^{N}2^{\Np}}}\sum_{\boldsymbol{\delta}}\sum_w e^{i2\pi \left(w+ \frac{2^N2^{\Np}}{ML}f(\boldsymbol{c}+\boldsymbol{\tilde\delta})\right)/2^{\Np}}\nonumber\\
&\hspace{4cm}\times|w\rangle|f(\boldsymbol{c}+\boldsymbol{\tilde\delta})\rangle|\boldsymbol{\delta}\rangle
\end{align}
where this trick is often called a phase kickback. In a small neighborhood, $L$, around the central point $\boldsymbol{c}$ of the oracle query the vectors $\boldsymbol{\delta}$ can be thought of as perturbations on this point.  Expanding the function according to a Taylor expansion gives
\begin{align}
\sum_{\boldsymbol{\delta}} e^{i2\pi\frac{2^N}{ML}\left(f(\boldsymbol{c})+\frac{\lrange{}}{2^N}\left(\boldsymbol{\delta}-\frac{\mathbf{N}}2\right)\cdot\boldsymbol{\nabla} f\right)}|f(\boldsymbol{c}+\boldsymbol{\tilde\delta})\rangle|\boldsymbol{\delta}\rangle\\
=e^{i2\pi\frac{2^N}{ML} f(\boldsymbol{c})}
e^{i\frac{2\pi}M\frac{\mathbf{N}}2\cdot\boldsymbol{\nabla} f}
\sum_{\boldsymbol{\delta}} e^{i\frac{2\pi}M\boldsymbol{\delta}\cdot\boldsymbol{\nabla} f}|f(\boldsymbol{c}+\boldsymbol{\tilde\delta})\rangle|\boldsymbol{\delta}\rangle\nonumber
\end{align}

Recall that, by inspection from Eq.~\eqref{qft}, applying the inverse Fourier transform will give
\begin{equation}
\bigotimes_k\left|\frac{2^N}{\m{}}(\nabla_{x_k}f)\Big|_c\right\rangle
\end{equation}
So, given a continuous function $f$, we obtain a gradient $\boldsymbol{\nabla} f$.  The representation on the qubits for both $f$ and $\boldsymbol{\nabla} f$ is in the binary representation of their continuous values. Improvements to this algorithm have been noted.\cite{gilyen2019optimizing}

\subsubsection{Functional derivatives with quantum gradient algorithms}\label{funcderivQGA}

The QGA can be used to evaluate the functional derivative, Eq.~\eqref{funcderivdef}.  The test field $\Upsilon(\x)$ can be provided by hand or by Hadamard transformation with each resulting state giving the same result.  Evaluating the derivative as before on $\eta$ produces the functional derivative. This could be applied to a variety of functionals, such as the density functional or the partition function (although this last object would be very difficult to compute before taking the functional derivative due again to the curse of dimensionality).\cite{novak2008tractability}  There are other ways to get the functional derivative, such as with a chain rule if an alternative form is required.

\subsection{Real time evolution}\label{rte}

While it is possible in theory to implement a more advanced classical algorithm on the quantum computer, there may be sizable overhead. It is generally accepted in the literature to use the real time evolution (RTE) method which we choose to introduce here.\cite{oh2008quantum}

The initial state of qubits can be initialized into a state consisting of a single-particle Hamiltonian's, $\mathcal{H}_0$, eigenstate.  This could be the Hartree-Fock solution for some number of electrons, $N_e$.\cite{whitfield2011simulation} The Hamiltonian is then given a time-dependence such that $t=0$ is $\mathcal{H}_0$ and $t=t_\mathrm{max.}$ is the full Hamiltonian,
\begin{equation}
\mathcal{H}(t)=\mathcal{H}_0+\lambda(t)\mathcal{H}_1+\mathcal{C}
\end{equation}
for some time-dependent function $\lambda(t)$ and interaction term $\mathcal{H}_1$.  By tuning the time parameter slowly enough, the new ground state can be found.  A constant $\mathcal{C}$ is added in anticipation of the QPE and is simply taken into account when converting the output phase to the energy.\cite{oh2008quantum}

The final state of the RTE must be a close approximation to the true ground state for QPE to work properly.\cite{nielsen2010quantum,kassal2011simulating}  In order to evolve the initial state to the ground state, the error in the Trotter step must be no more than the allowed accuracy for a computation.\cite{wecker2014gate}  For the case of molecular systems, this is 1 mHa (although this can be even lower for some applications).  A variable number of steps is required to fully evolve the initial wavefunction to the ground state, but this may be on the order of a number of thousands and the entire process can take months or much, much longer.\cite{poulin2014trotter}

There are also other algorithms that could be used,\cite{mcardle2019variational,hackl2008real,schwarz2012preparing,schollwock2011density,gilyen2019quantum,yanofsky2008quantum} but these may have a large overhead.  Just as with phase estimation, we present the most widely known algorithm here for ease of presentation.  The RWMP method of the main text can interchange subroutines for the best algorithm

\subsection{Quantum amplitude estimation (quantum counting)}\label{qcount}

A way to obtain useful quantities from a wavefunction without measuring it is to use QAE\footnote{This algorithm can go by other names, notably quantum counting, but we use `QAE' instead of the abbreviated `QC' for quantum counting since this would overlap with `quantum chemistry' and `quantum computing' if we chose to also use abbreviations for those.} as described by Ref.~\onlinecite{brassard2002quantum} (although we follow the state-preserving quantum counting algorithm used in Ref.~\onlinecite{temme2011quantum} and also note Refs.~\onlinecite{knill2007optimal,bakerGreen20}).  One can envision the application of an operator $\Op$ onto the wavefunction ({\it e.g.}, $\hat c^\dagger_i\hat c_j$ onto $|\Psi\rangle$) as being represented in a superposition of the original function $\Psi$ and all other states that are perpendicular, $\Psi^\perp$, as
\begin{equation}
\Op|\Psi\rangle=\alpha_0|\Psi\rangle+\alpha^\perp|\Psi^\perp\rangle.
\end{equation}
We will simply assume some process exists to apply the operator of interest is available. The goal is to estimate $\alpha_0$ which is the expectation value.  A series of steps is required to find this coefficient.  The fraction of times that the following is successful will give $\alpha_0$.  Before performing any of the subsequent steps, we assume that the energy of $\Psi$ has been obtained via phase estimation beforehand on a separate register.  In total, four registers are required: one for $\Psi$, one for the saved ground state energy, one for the check energy, and one for the pointer qubit.

One iteration of the full algorithm is the following:
\begin{enumerate}
\item An operator is applied to $\Psi$
\item The energy of the resulting state is determined; the energy is in a superposition over all states
\item The newly found energy is compared to the saved energy of the original wavefunction
\item The difference is represented as a single bit (called a pointer qubit)
\item (Accept) The pointer qubit is measured.  The algorithm then branches: if the measurement of the single pointer qubit gives a success, implying the energies match and the original $\Psi$ was recovered, we repeat the steps starting from step 1 here after returning the state to the original configuration.  A separate counter is incremented by one each time this step is reached.
\item (Reject) If the pointer measurement results in a failure, then the wavefunction found is not the original.  We must recover the original wavefunction by undoing the QPE, undoing the operator applied ($\Op^\dagger$), applying the operator again, and again finding the energy difference. We return to step 1.
\end{enumerate}
More details and diagrams can be found in Refs.~\onlinecite{brassard2002quantum,marriott2005quantum,temme2011quantum}.  Intuitively, we are counting the number of times the starting wavefunction is recovered when applying an operator.  The ratio of accepted counts to the total number of times the operator is applied is related to the coefficient on the ground state, $\alpha_0$.  While the description here involves a measurement and therefore gives classical data, the algorithm can be used as an oracle query for the QGA (as stated in Ref.~\onlinecite{temme2011quantum} at the cost of additional auxiliary qubits).

In order to see how the algorithm will converge when the reject step is activated, we can analogize with the half-life of radioactive isotopes to envision when the rejection procedure must eventually recover the correct ground state.  A detailed analysis of the convergence of the rejection step shows the number of steps required to find the original $\Psi$ is related to $1/\varepsilon$ for some probability of failure, $\varepsilon$ and is found in Refs.~\onlinecite{temme2011quantum,brassard2002quantum}.

\end{appendix}

\bibliography{MLQCDFT}

\end{document}